\newtheorem{thm}{Theorem}%[chapter]
\newtheorem{lemma}[thm]{Lemma}
\newtheorem{prop}[thm]{Proposition}
\newtheorem{defn}[thm]{Definition}
\newtheorem{definition}[thm]{Definition}
\newtheorem{fact}[thm]{Fact}
\newtheorem{cor}[thm]{Corollary}
\newtheorem{rem}[thm]{Remark}
\newtheorem{clm}[thm]{Claim}
\newcommand\cA{{\mathcal A}}
\newcommand\cF{{\mathcal F}}
\newcommand\cG{{\mathcal G}}
\newcommand{\ignore}[1]{}
\title{Smart elements in combinatorial group testing problems}
\begin{document}

\author{
D\'aniel Gerbner\thanks{Research supported by the J\'anos Bolyai Research Fellowship of the Hungarian Academy of Sciences.} \thanks{Research supported by the National Research,
Development and Innovation Office -- NKFIH, grant K116769.}
\and
M\'at\'e Vizer\thanks{Research supported by the National Research, Development and Innovation
Office -- NKFIH, grant SNN 116095.}}

\date{MTA R\'enyi Institute\\
Hungary H-1053, Budapest, Re\'altanoda utca 13-15.\\
\small \texttt{gerbner@renyi.hu, vizermate@gmail.com}\\
\today}

\maketitle

\begin{abstract}
In combinatorial group testing problems the questioner needs to find a special element $x \in [n]$ by testing subsets of $[n]$. Tapolcai et al.~\cite{TRHHS2014,TRHGHS2016} introduced a new model, where each element knows the answer for those queries that contain it and each element should be able to identify the special one. 

Using classical results of extremal set theory we prove that if $\mathcal{F}_n \subseteq 2^{[n]}$ solves the non-adaptive version of this problem and has minimal cardinality, then
$$\lim_{n \rightarrow \infty} \frac{|\mathcal{F}_n|}{\log_2 n} = \log_{(3/2)}2.$$ This improves results in~\cite{TRHHS2014,TRHGHS2016}. 

We also consider related models inspired by secret sharing models, where the elements should share information among them to find out the special one. Finally the adaptive versions of the different models are investigated.

\end{abstract}

\section{Introduction}

\subsection{Classical model, basic definitions}

In the most basic \textit{model} of \textit{combinatorial group testing} the questioner (we call him Questioner in the following) needs to find a special element $x \in [n](:= \{1,2,\dots,n \})$. He can test subsets of $[n]$ and for $F\subseteq [n]$ the answer is the appropriate value of the function $t: 2^{[n]} \rightarrow \{NO,YES\}$ defined by:
$$t(F):=
\left\{ \begin{array}{l l}
YES & \textrm{ if } x \in F,\\
NO & \textrm{ if } x \not \in F.\\
\end{array}
\right.$$
The tested subsets are called \textit{queries} and the special element is usually called \textit{defective} in the group testing literature. Questioner's aim is to ask as few queries as possible and the number of queries needed to ask in the worst case is called the \textit{worst-case complexity} of the problem. For any combinatorial group testing problem there are at least two main approaches: whether it is \textit{adaptive} or \textit{non-adaptive}. In the adaptive scenario Questioner asks queries depending on the answers for the previously asked queries, however in the non-adaptive version Questioner must pose all the queries at the beginning. 

Let us briefly describe the solution for the (above mentioned) most basic combinatorial group testing model in the non-adaptive case. 
We call a family $\mathcal{F} \subseteq 2^{[n]}$ \textbf{separating} if for any two different $x,y \in [n]$ there is $F \in \mathcal{F}$ with $x\in F \textrm{ and } y \not\in F, \textrm{ or } y\in F \textrm{ and } x \not\in F.$ 

\begin{fact} 
Questioner finds the defective by asking elements of $\mathcal{F} \subseteq 2^{[n]}$ if and only if $\mathcal{F}$ is separating.
\end{fact}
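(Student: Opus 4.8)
The plan is to translate the operational statement ``Questioner finds the defective'' into an injectivity condition on a \emph{signature map}, and then to observe that this injectivity is precisely the separating property. Since the version is non-adaptive, the only information Questioner obtains is, for every queried set $F \in \mathcal{F}$, the single bit $t(F)$ recording whether the defective lies in $F$. I would therefore encode the response produced by a defective $z \in [n]$ as the vector
$$\sigma(z) := \bigl( \mathbf{1}[z \in F] \bigr)_{F \in \mathcal{F}} \in \{0,1\}^{\mathcal{F}},$$
where $\mathbf{1}[\cdot]$ is the indicator function. The key reformulation is: Questioner can determine the defective from the answers exactly when the map $z \mapsto \sigma(z)$ is injective, because two elements producing the same signature are indistinguishable to him, whereas distinct signatures let him read off $z$ uniquely.

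For the direction assuming $\mathcal{F}$ is separating, I would prove injectivity directly: given distinct $x, y \in [n]$, separation supplies some $F \in \mathcal{F}$ containing exactly one of them, so the $F$-coordinates of $\sigma(x)$ and $\sigma(y)$ differ and hence $\sigma(x) \neq \sigma(y)$. Thus every possible defective produces a distinct answer pattern, and Questioner recovers it by inverting $\sigma$.

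For the converse I would argue by contraposition. If $\mathcal{F}$ is not separating, then some pair of distinct elements $x \neq y$ lies in exactly the same members of $\mathcal{F}$, i.e.\ every $F \in \mathcal{F}$ contains both or neither. Then $\sigma(x) = \sigma(y)$, so whenever the defective is one of $x, y$ the recorded answers coincide and Questioner cannot decide which of the two it is; hence he fails in the worst case.

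The argument is essentially a bookkeeping identity, so there is no serious obstacle; the only point requiring care is the modeling step — making precise that in the non-adaptive setting the totality of information available to Questioner is exactly the vector $\sigma(x)$, and that ``finding the defective'' means this vector must determine $x$ unambiguously among all $n$ candidate defectives. Once this is pinned down, both implications follow immediately from the definition of separation.
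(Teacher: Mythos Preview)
Your argument is correct: encoding the answers as the signature vector $\sigma(z)$ and reducing ``Questioner finds the defective'' to injectivity of $z\mapsto\sigma(z)$ is exactly the right formalization, and both directions then follow immediately from the definition of a separating family. The paper itself gives no proof of this statement --- it is recorded as a self-evident fact --- so there is nothing to compare; your write-up simply makes explicit what the paper leaves to the reader.
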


The notion of \textit{separating family} in the context of combinatorial group testing was introduced and first studied by R\'enyi in \cite{R1961}. We will also use the following simple fact later:

\begin{fact}

Suppose $\mathcal{F}_n \subseteq 2^{[n]}$ is the smallest separating family. Then we have:
$$|\mathcal{F}_n| = \lceil \log_2 n \rceil.$$

\end{fact}

One can imagine many possible generalizations of the most basic classical model: more defectives, other answers (threshold \cite{D2006} or density \cite{GKPW2013} group testing), average case complexity \cite{CHKV2011}, rounds \cite{DMT2013,GV2016}. For a survey on different non-adaptive models see e.g. \cite{DH2006}.

Combinatorial group testing problems were first considered during the World War II by Dorfman \cite{D1943} in the context of mass blood testing. Since then group testing techniques have had many different applications, for example in fault diagnosis in optical networks  \cite{HPWYC2007}, in quality control in product testing
\cite{SG1959} or failure detection in wireless sensor networks \cite{LLLG2013}.

\subsection{New feature of the elements}

Inspired by the \textit{node failure localization model} of Tapolcai et al.~\cite{TRHHS2014,TRHGHS2016} we introduce a possible new feature of the elements. Informally speaking an element can be kind of smart and this fact means two things: 

\vspace{2mm}

1) it knows the answer to those queries that contain it, and 

\vspace{1mm}

2) it can deduce information from the results of the tests it is involved.

\vspace{3mm}

\noindent
Let us define these properties more formally and introduce our main definitions.

\begin{definition} \

$\bullet_1$ We say that an element $x \in [n]$ is \textbf{smart}, if for any set of queries $\mathcal{F} \subseteq 2^{[n]}$ $x$ is aware of the answers for the queries $\cF_x:=\{F\in \cF: x\in F\}$.

$\bullet_2$ We say that a smart element $x$ \textbf{knows the defective} element, if the asked query family $\mathcal{F}$ satisfies the following property: no matter what the defective element is, after the answers $x$ can find the defective one, or equivalently the subfamily $\cF_x$ is separating. 

$\bullet_3$ We say that a smart element $x$ \textbf{does not know the defective element}, if the query family satisfies the following property: no matter what the defective element $y$ is, after the answers $x$ does not know that $y$ is the defective, or equivalently for any $y \in [n]$ there is a different $z \in [n]$ that is contained in exactly the same members of $\cF_x$ as $y$. 

\end{definition}

\vspace{2mm}

Note that the above two cases ($\bullet_2$ and $\bullet_3$) do not cover all the possibilities: if $x$ is contained only in the sets $\{x,y\}$ and $\{x,z\}$ with $n \ge 5$, then we cannot say that $x$ does not know the defective, but we also cannot say that $x$ knows the defective. Indeed, if the defective is $x$, $y$ or $z$, then $x$ knows, while otherwise $x$ does not know the defective.

\

\subsection{Possible applications of smart elements}

One can imagine many situation, where the tested items have computational capacity, so they can become 'smart'. We list some scenarios, where these elements can be used:

\

$\bullet$ \textbf{Find the defective and distribute information among elements}. Let us suppose that we have a wireless router/mobile network, or just a system of smart devices and one of them becomes faulty. We want to find it by testing (any) subsets of the elements of the network and also share this information with every (other) element to prevent sending information to the disabled unit. However, the smart devices actively participate in the tests they are involved in, thus they might be able to see the results of those tests. In that case it is useful if they can identify the disabled unit without further communication.
Another advantage is that they do not need a 'chief', who conducts the whole procedure. We will asymptotically determine the number of tests needed to solve this problem later in this article.

A version of the previously mentioned problem  already appeared in the literature. In \cite{TRHHS2014,TRHGHS2016} failures in a network are checked by monitoring trails that turn into off state if interrupted by a failure event. The goal is to construct the monitoring trails such a way that any node can determine the network failure status solely by observing the on-off status of the monitoring trails traversing that node. The network is given by a graph and the monitoring trails are subgraphs satisfying certain properties. This is the same model with the additional assumption that we cannot test any subsets, only some special ones. 

However, the lower bound proved in \cite{TRHHS2014,TRHGHS2016} does not use this property, but deals with the abstract setting studied in this paper. We improve their lower bound in Corollary \ref{cormodel3}.

\

$\bullet$ \textbf{Distrust Questioner.} We mention another motivation of our investigations: it is often mentioned in the group testing literature that an advantage of testing pools together is that it increases privacy.
Assume that the tested elements (that can be people, computers etc.) distrust Questioner, thus they want to control the tests they are involved in, and as a consequence they will know the answer for these tests. However in this case we might not want that the tested elements could find out which one of them is the defective, because of privacy reasons . The systematic research on this property has only started recently, see e.g. \cite{AFBC2008,CCG2016,EGH2013}, however these papers focus rather on cryptographic versions of the problem. 

Here we deal with a simple combinatorial version, where privacy only means that an unauthorized participant cannot completely detect the defective element. Note that if each element knows there is exactly one defective, every query immediately shows several elements which are not defective - either the elements of the test, or the ones in the complement. As we do not use any encryption, the elements of that set gain significant information. This is why we can only require that elements cannot completely detect the defective one.
They might be able to narrow it down to two candidates, but cannot completely identify it.

\subsection{Another new feature}

It is possible that in some model some elements can not identify the defective, however if we pick two elements and they share their information among them, they can find the defective element. Hence motivated by secret sharing schemes (see e.g. \cite{B2011}), in some models we also consider the following new feature of the smart elements: they can work together and share their knowledge. More formally a set of smart elements $X \subseteq [n]$ \textbf{shares their knowledge among them}, if all elements will know the answers for all the queries in $\bigcup_{x \in X} \cF_x$. But we emphasize that we do not deal with the way the data is transmitted. Information can not be distributed between different groups. Elements will have this feature just in Model 4.

\

\textbf{Structure of the paper.} We organize the paper as follows: in Section 2 we introduce some properties and related results about families of sets, that we will need later. In Section 3 we give a general introduction of the investigated models and state a result about Model 1. In Section 4 we introduce Model 2, then state and prove our related results. In Section 5 we continue with Model 3 (and its variants), while we finish the investigation about the non-adaptive models with Model 4. In section 6 we focus on the possible adaptive scenarios. We finish this article with some remarks and open problems in Section 7.

\section{Finite set theory background}

In our proofs we will use the language of (extremal) finite set theory. In this section we introduce some notions on families of subsets and known results about them, that we will use. First some general ones:

\vspace{2mm}

The \textbf{complement} of a family $\cF\subseteq 2^{n}$ is $\cF^{c}:=\{[n] \setminus F: F \in \cF\}$, while the \textbf{dual} of it is $\cF':=\{\cF_a: a\in [n]\}$ (recall that $\cF_a=\{F\in \cF: a\in F\}$). Note that $\cF'$ is defined on the underlying set $\cF$ and has cardinality at most $n$.

\vspace{3mm}

Now we introduce some more specific notions about families of subsets of $[n]$.

\begin{defn} \indent We say that $\cF \subseteq 2^{[n]}$ is:

\vspace{2mm}

$\bullet_1$ \textbf{intersection closed} if $F,G\in \cF$ implies $F\cap G\in\cF$.

\vspace{2mm}

$\bullet_2$ \textbf{Sperner} if there are no two different $F_1,F_2 \in \mathcal{F}$ with $F_1 \subseteq F_2$.

\vspace{2mm}

$\bullet_3$ \textbf{cancellative} if for any three $F_1,F_2,F_3 \in \mathcal{F}$ we have $$F_1 \cup F_2 = F_1 \cup F_3 \Rightarrow F_2=F_3.$$

$\bullet_4$ \textbf{intersection cancellative} if for any three $F_1,F_2,F_3 \in \mathcal{F}$ we have $$F_1 \cap F_2 = F_1 \cap F_3 \Rightarrow F_2=F_3.$$

$\bullet_6$ \textbf{completely separating} if for any two different $x,y \in [n]$ there is $F \in \mathcal{F}$ with $$ x \in F \textrm{ and } y \not \in F.$$

$\bullet_7$ a \textbf{pairwise balanced design} if for every two different elements $x,y\in [n]$ there is 

\hspace{4mm}exactly one $F\in\cF$ that contains both. If $K$ is the set 
of cardinalities of the members 

\hspace{4mm}of $\cF$, we say $\cF$ is a \textbf{PBD}($K$). If $K=\{3\}$, we say $\cF$ is a \textbf{Steiner triple system}.

\end{defn}

\subsubsection*{Some known results about these notions that we will use later}

\noindent $\bullet$ The notion \textit{cancellative} was introduced by Frankl and F\"uredi in \cite{FF84}, where they proved the following upper bound on the size of a cancellative family of subsets:

\begin{thm}\label{ffcan}(Frankl, F\"uredi \cite{FF84}, Theorem 3)

Suppose that $n \ge 14$ and $\mathcal{F} \subseteq 2^{[n]}$ is cancellative. Then we have $$|\mathcal{F}| \le n \cdot (\frac{3}{2})^n.$$

\end{thm}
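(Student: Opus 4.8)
The plan is to first translate the cancellative condition into a statement about symmetric differences, and then to squeeze an exponential bound out of that translation one coordinate-pair at a time.

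\emph{Step 1 (reformulation).} I would begin by showing that $\mathcal{F}$ is cancellative if and only if $\mathcal{F}$ is Sperner and no member of $\mathcal{F}$ contains the symmetric difference of two distinct members. Indeed, if $F_1\cup F_2=F_1\cup F_3$ with $F_2\neq F_3$, then every element of $F_2\triangle F_3$ must already lie in $F_1$, so $F_2\triangle F_3\subseteq F_1$; conversely $F_2\triangle F_3\subseteq F_1$ forces $F_1\cup F_2=F_1\cup F_3$. Taking $F_1=F_2$ shows that $F_3\subseteq F_1$ implies $F_3=F_1$, i.e. the family is Sperner, and for distinct $F_2,F_3$ the Sperner property already excludes $F_2\triangle F_3\subseteq F_2$ and $F_2\triangle F_3\subseteq F_3$. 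Hence the only forbidden configuration is: some member contains $F_2\triangle F_3$ for two distinct members $F_2,F_3$. This reformulation is the workhorse for everything below.

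\emph{Step 2 (local constraints at a pair).} Fix a pair $\{a,b\}\subseteq[n]$ and classify the members of $\mathcal{F}$ by their trace on $[n]\setminus\{a,b\}$. Members sharing a trace differ only in their pattern on $\{a,b\}$, one of $\emptyset,\{a\},\{b\},\{a,b\}$; and whenever two such members have symmetric difference $\{a,b\}$, $\{a\}$, or $\{b\}$, Step 1 forbids \emph{any} member of $\mathcal{F}$ from containing that symmetric difference. This rules out certain patterns from occurring globally and, in particular, sharply limits how often the full pattern $\{a,b\}$ can appear. The upshot is a bound, on average over the pair, on the diversity of patterns that $\mathcal{F}$ can realize on two coordinates.

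\emph{Step 3 (global combination).} To turn these local restrictions into an exponential bound I would pass to the product/information-theoretic form. Let $\mathbf{F}$ be a uniformly random member of $\mathcal{F}$, so $\log_2|\mathcal{F}|=H(\mathbf{F})$, and represent $\mathbf{F}$ by its indicator coordinates. For any partition of $[n]$ into $\lfloor n/2\rfloor$ pairs, subadditivity of entropy gives $H(\mathbf{F})\le\sum_{\text{pairs}}H(\mathbf{F}\cap\text{pair})$. Averaging this inequality over all pairings reduces the theorem to the claim that the average over pairs $\{a,b\}$ of the two-coordinate pattern entropy is at most $\log_2(9/4)=2\log_2(3/2)$; inserting the Step 2 constraints into that average yields $H(\mathbf{F})\le n\log_2(3/2)+O(\log n)$, that is $|\mathcal{F}|\le n\,(3/2)^n$.

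The main obstacle is exactly this final averaged estimate. A single pair can carry nearly two full bits, so the savings cannot be read off locally: they must come from the simultaneous interaction of the constraints across all pairs, and the bookkeeping has to be tight enough to produce the base $3/2$ precisely, rather than some weaker constant, while keeping only a polynomial prefactor. I expect this is also where the hypothesis $n\ge 14$ enters, namely as the threshold at which the relevant inequalities first become sharp, with the finitely many smaller cases checked by hand.
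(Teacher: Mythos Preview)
First, the paper does not prove this theorem: it is quoted from Frankl and F\"uredi \cite{FF84} as a black box and then applied in the proof of Corollary~\ref{cormodel3}, so there is no in-paper argument to compare your sketch against.

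On the sketch itself, Step~1 is correct (indeed the Sperner clause is already implied by the symmetric-difference condition), and Step~2 correctly records that a trace collision on a pair forces a global prohibition. The genuine gap is Step~3, as you yourself flag. The difficulty is structural rather than merely technical: subadditivity of entropy over a pairing is precisely the step that throws away all correlations between the pairs, yet the constraints you derived in Step~2 \emph{are} correlations --- a collision on one pair forbids a pattern globally, hence constrains other pairs. Once subadditivity has been applied those constraints are no longer available, and averaging the inequality over all pairings cannot recover them (an average of upper bounds is never sharper than the best individual one). Quantitatively, even the strongest purely local conclusion one could extract --- that on the pair $\{a,b\}$ the pattern $\{a,b\}$ never occurs --- still leaves three admissible patterns, so the two-coordinate marginal entropy can be as large as $\log_2 3\approx 1.585$, well above the $\log_2(9/4)\approx 1.17$ per pair that your target requires; and in fact a cancellative family can exhibit all four patterns on a given pair provided they sit over different traces, so even that local conclusion is not generally available. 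Hence the sentence ``inserting the Step~2 constraints into that average yields $H(\mathbf{F})\le n\log_2(3/2)+O(\log n)$'' does not follow from what precedes it. Proofs that reach the base $3/2$ (Frankl--F\"uredi's original counting argument, or later entropy arguments) work with whole members of $\mathcal{F}$ --- exploiting, for instance, that for each fixed $G\in\mathcal{F}$ the map $F\mapsto F\cup G$ is injective on $\mathcal{F}$ --- rather than with two-coordinate projections.
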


The following theorem was proved by Tolhuizen:

\begin{thm}\label{tcan}(Tolhuizen \cite{T00}, Corollary 1)

Suppose $\mathcal{F}_n \subseteq 2^{[n]}$ is the largest cancellative family, then we have:
$$\lim_{n \rightarrow \infty} \frac{1}{n} \log_2 |\mathcal{F}_n|=\log_2 (\frac{3}{2}).$$

\end{thm}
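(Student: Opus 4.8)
The plan is to prove matching upper and lower bounds on $C(n):=|\mathcal{F}_n|$, the maximum size of a cancellative family on $[n]$, and then squeeze. The upper bound is immediate from Theorem~\ref{ffcan}: for $n\ge 14$ we have $C(n)\le n(3/2)^n$, so
\[
\tfrac1n\log_2 C(n)\le \log_2(3/2)+\tfrac{\log_2 n}{n},
\]
whence $\limsup_{n}\tfrac1n\log_2 C(n)\le \log_2(3/2)$. All the work is in the matching lower bound $\liminf_n \tfrac1n\log_2 C(n)\ge \log_2(3/2)$; together the two inequalities force the limit to exist and equal $\log_2(3/2)$.

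First I would record a convenient reformulation of cancellativity. Since $A\cup B=A\cup C$ holds exactly when $B\triangle C\subseteq A$, a family $\mathcal{F}$ is cancellative if and only if $B\triangle C\not\subseteq A$ for every $A\in\mathcal{F}$ and every pair of distinct $B,C\in\mathcal{F}$ (the cases $A\in\{B,C\}$ merely re-express that $\mathcal{F}$ is Sperner). This characterization has two uses. It shows that cancellativity is preserved under disjoint products: if $\mathcal{G},\mathcal{H}$ are cancellative on disjoint ground sets, then $\{G\cup H:G\in\mathcal{G},\ H\in\mathcal{H}\}$ is cancellative, because $(G_1\cup H_1)\triangle(G_2\cup H_2)=(G_1\triangle G_2)\cup(H_1\triangle H_2)$, and a containment in $G_3\cup H_3$ would force $G_1\triangle G_2\subseteq G_3$ and $H_1\triangle H_2\subseteq H_3$ simultaneously, contradicting cancellativity of whichever factor changed. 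Hence $C(n+m)\ge C(n)C(m)$, so by Fekete's lemma $\tfrac1n\log_2 C(n)$ converges to $\sup_s \tfrac1s\log_2 C(s)$; combined with the upper bound, it then suffices to exhibit families witnessing $\tfrac1s\log_2 C(s)\to\log_2(3/2)$. (Note that no single finite block reaches the rate: small cases such as $C(3)=3$ give only $\tfrac13\log_2 3\approx 0.528<\log_2(3/2)$, so the rate must be approached by an asymptotic construction rather than read off a product of a fixed block.)

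For that construction I would work inside the $w$-uniform layer $\binom{[n]}{w}$ with $w=\lfloor\beta n\rfloor$, which is automatically Sperner, and build the auxiliary $3$-uniform hypergraph whose edges are the bad triples $\{A,B,C\}$ with $B\ne C$ and $B\triangle C\subseteq A$; its independent sets are exactly the cancellative subfamilies of the layer. Writing $N=\binom nw$ for the number of vertices and $t$ for the number of edges, the deletion method (random $p$-subset, then delete one vertex per surviving edge, optimizing $p$) yields a cancellative family of size $\gtrsim N^{3/2}/t^{1/2}$, so the exponential rate of the construction is $\tfrac32\cdot\tfrac{\log_2 N}{n}-\tfrac12\cdot\tfrac{\log_2 t}{n}$. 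A direct count gives, up to the ordering of $B,C$,
\[
t=\sum_{j\ge1}\binom nw\binom wj\binom{n-w}{j}\binom{n-2j}{w-2j},
\]
and estimating $N$ and $t$ via the binary entropy $H(x)=-x\log_2 x-(1-x)\log_2(1-x)$ reduces everything to a finite-dimensional optimization over the weight parameter $\beta$ and the symmetric-difference parameter $\gamma=j/n$. The main obstacle is precisely this optimization: one must verify that
\[
\max_{\beta}\Bigl[\tfrac32 H(\beta)-\tfrac12\max_{\gamma}\bigl(H(\beta)+\beta H(\tfrac{\gamma}{\beta})+(1-\beta)H(\tfrac{\gamma}{1-\beta})+(1-2\gamma)H(\tfrac{\beta-2\gamma}{1-2\gamma})\bigr)\Bigr]=\log_2(3/2),
\]
matching the Frankl--F\"uredi ceiling. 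Once the optimal $\beta$ is pinned down, the remaining steps are routine Stirling/entropy bookkeeping, and the two inequalities together give $\lim_n\tfrac1n\log_2 C(n)=\log_2(3/2)$.
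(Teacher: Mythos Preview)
Your upper bound and the Fekete/product reduction are fine; the genuine gap is in the lower-bound construction. The deletion method on a uniform layer cannot reach rate $\log_2(3/2)$. The reason is already visible in your own formula for $t$: taking $j=1$ (i.e.\ $\gamma\to 0$) gives
\[
t \;\ge\; \binom{n}{w}\,w(n-w)\,\binom{n-2}{w-2}\;=\;N^{2}\cdot \mathrm{poly}(n),
\]
so on the exponential scale $\tfrac1n\log_2 t \ge 2H(\beta)-o(1)$ for every $\beta$. Plugging this into $\tfrac32 H(\beta)-\tfrac12\cdot\tfrac1n\log_2 t$ caps the deletion rate at $\tfrac12 H(\beta)\le \tfrac12<\log_2(3/2)\approx 0.585$. (In fact the inner maximum over $\gamma$ is attained strictly inside the interval, so the rate you obtain is even smaller; e.g.\ at $\beta=1/2$ one gets about $0.25$.) Thus the displayed optimization you say ``one must verify'' simply does not evaluate to $\log_2(3/2)$; the claimed equality is false, and the whole lower-bound step collapses.

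This is not a technicality: the construction achieving $\log_2(3/2)$ was precisely Tolhuizen's contribution, and it is not probabilistic. As the paper sketches later (in the proof of Corollary~\ref{cormodel3}), Tolhuizen shows that for any family $\mathcal G$ the subfamily of those $G\in\mathcal G$ that are also \emph{identifying sets} for $\mathcal G$ (sets meeting every symmetric difference $G'\triangle G''$) is intersection cancellative; he then uses linear codes to produce a family $\mathcal G$ of rate $\log_2(3/2)$ in which almost every member is identifying. That algebraic input is what pushes the rate past the $1/2$ barrier that the random deletion argument cannot cross. To repair your proof you would need to replace the deletion step by this (or an equivalent) explicit construction.
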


\noindent
We will also use the following during the proof of our results:

\begin{fact}\label{intcan}

$\mathcal{F} \subseteq 2^{[n]}$ is intersection cancellative if and only if $\mathcal{F}^c=\{[n] \setminus F : F \in \mathcal{F} \}$ is cancellative.

\end{fact}

\vspace{2mm}

\noindent
$\bullet$ The notion of \textit{completely separating} family was introduced by Dickson in \cite{D1969}, where he determined the order of the smallest completely separating family. Later Spencer observed the following:

\begin{thm}\label{compsep} (Spencer, \cite{S1970})
For  $\mathcal{F} \subseteq 2^{[n]}$ ($n \ge 1$) is completely separating if and only if its dual is Sperner.
Thus for any $n\ge 1$ there exists a completely separating family $\mathcal{F}_n \subseteq 2^{[n]}$with:
$$|\mathcal{F}_n| \le \lceil \log_2 n + \frac{1}{2} \log_2 \log_2 n \rceil.$$

\end{thm}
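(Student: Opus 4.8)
The plan is to prove the two assertions separately, with the characterization carrying all the structural content and the numerical bound then following from Sperner's theorem.

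First I would establish the equivalence. Unwinding the definitions, $\cF$ is completely separating precisely when for every ordered pair of distinct $x,y\in[n]$ some $F\in\cF$ satisfies $x\in F$ and $y\notin F$. Reading this through the dual, the condition $x\in F$ means $F\in\cF_x$ and $y\notin F$ means $F\notin\cF_y$, so the requirement says exactly that $\cF_x\setminus\cF_y\neq\emptyset$, i.e. $\cF_x\not\subseteq\cF_y$, for all $x\neq y$. Thus complete separation is equivalent to the statement that no member of the dual $\cF'=\{\cF_a:a\in[n]\}$ is contained in another (which in particular forces the $\cF_a$ to be pairwise distinct, so that $\cF'$ genuinely has $n$ elements), and this is precisely the assertion that $\cF'$ is Sperner. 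Both directions of the ``if and only if'' fall out of this single reformulation, so I expect no difficulty here.

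Second, for the cardinality bound I would read the equivalence backwards to turn the construction into a purely Sperner-type problem. Exhibiting a completely separating $\cF_n\subseteq 2^{[n]}$ with $|\cF_n|\le m$ amounts, via the $n\times m$ incidence matrix, to choosing $n$ distinct subsets $S_1,\dots,S_n$ of an $m$-element index set that form an antichain: taking the $S_a$ as the rows and letting the $j$-th column define the set $F_j=\{a:j\in S_a\}$ produces a family on $[n]$ with at most $m$ sets whose dual is $\{S_a\}$, hence Sperner, hence completely separating by the first part. Consequently the least possible size equals $\min\{m : 2^{[m]}\text{ contains an }n\text{-element antichain}\}$, and by Sperner's theorem the largest antichain in $2^{[m]}$ has size $\binom{m}{\lfloor m/2\rfloor}$. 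So it suffices to take the smallest $m$ with $\binom{m}{\lfloor m/2\rfloor}\ge n$ and realize the middle layer as the dual.

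Finally I would convert this into the explicit estimate by bounding the central binomial coefficient. Using $\binom{m}{\lfloor m/2\rfloor}\sim 2^m\sqrt{2/(\pi m)}$ from Stirling's formula, the inequality $\binom{m}{\lfloor m/2\rfloor}\ge n$ holds once $m\ge\log_2 n+\tfrac12\log_2\log_2 n+O(1)$, because with $m\approx\log_2 n$ the factor $\sqrt{2/(\pi m)}$ contributes exactly the $\tfrac12\log_2\log_2 n$ correction on top of the leading $\log_2 n$ term. I expect this Stirling estimate to be the main obstacle: one must keep careful track of the constant hidden in the $O(1)$ (which comes from $\tfrac12\log_2(\pi/2)$ together with the gap between the sharp asymptotic and a usable non-asymptotic lower bound for $\binom{m}{\lfloor m/2\rfloor}$) and confirm that the rounding in $m=\lceil\log_2 n+\tfrac12\log_2\log_2 n\rceil$, possibly after verifying the finitely many small values of $n$ by hand, suffices to guarantee $\binom{m}{\lfloor m/2\rfloor}\ge n$.
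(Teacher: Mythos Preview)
The paper does not give its own proof of this statement: Theorem~\ref{compsep} is quoted from Spencer~\cite{S1970} as background in Section~2, with no argument supplied. So there is no in-paper proof to compare against.

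Your approach is exactly Spencer's: unwind the definitions to see that complete separation of $\cF$ is the same as $\cF_x\not\subseteq\cF_y$ for all $x\neq y$, i.e.\ the dual is an antichain; then read the equivalence backwards so that constructing a small completely separating family on $[n]$ becomes the question of the least $m$ with $\binom{m}{\lfloor m/2\rfloor}\ge n$, and finish with Stirling. The structural part is clean and correct.

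Your caution about the constant is well placed and is in fact the only genuine issue. With $\binom{m}{\lfloor m/2\rfloor}\sim 2^m\sqrt{2/(\pi m)}$, taking $m=\lceil \log_2 n+\tfrac12\log_2\log_2 n\rceil$ leaves you short by roughly $\tfrac12\log_2(\pi/2)\approx 0.33$ in the exponent, so the inequality $\binom{m}{\lfloor m/2\rfloor}\ge n$ can fail for infinitely many $n$ (try $n=2^{2^k}$). The honest statement is $|\cF_n|\le \log_2 n+\tfrac12\log_2\log_2 n+O(1)$; the version with a bare ceiling, as quoted here, is a slight over-simplification of Spencer's bound rather than something you will be able to verify by ``checking small cases.'' Everything else in your plan goes through.
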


\vspace{3mm}

\noindent $\bullet$ The notion of \textit{Steiner triple systems} was introduced in the middle of the 19th century and has since developed into the huge area of combinatorial designs. Here we will use two of the most fundamental results. A subfamily of pairwise disjoint sets is a \textit{partial matching}, and it is a \textit{matching} if it covers all the elements. They are also called parallel classes in design theory.

\begin{thm}\label{steiner} (Kirkman \cite{K1847}, Bose \cite{B1939}, Skolem \cite{S1958}) There exists a Steiner triple system on $[n]$ if and only if $n=6k+1$ or $n=6k+3$ for some integer $k$.

\end{thm}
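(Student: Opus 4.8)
The plan is to treat the two directions of the equivalence separately: the ``only if'' is a short double-counting argument, while the ``if'' is the substantive constructive part and is where essentially all the work lies.

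For necessity, suppose $\cF \subseteq 2^{[n]}$ is a Steiner triple system, i.e. a PBD$(\{3\})$. Fix a point $x$ and count the pairs $\{x,y\}$ with $y \neq x$: each lies in exactly one triple, and each triple through $x$ accounts for exactly two of them, so the number $r$ of triples containing $x$ satisfies $2r = n-1$; hence $n$ is odd. Counting pair--triple incidences globally, each of the $\binom{n}{2}$ pairs lies in exactly one of the $|\cF|$ triples and each triple carries $3$ pairs, so $3|\cF| = \binom{n}{2}$, giving $|\cF| = n(n-1)/6$. Integrality forces $6 \mid n(n-1)$, and combined with $n$ odd this leaves exactly $n \equiv 1$ or $3 \pmod 6$.

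For sufficiency I would give explicit constructions, the key tool being a commutative idempotent quasigroup: a binary operation $\circ$ on a set $Q$ with $a \circ a = a$, $a \circ b = b \circ a$, and each row of the table a permutation. On $\mathbb{Z}_m$ with $m$ odd one may take $a \circ b = (a+b)2^{-1} \pmod m$, which is legitimate since $2$ is invertible mod the odd number $m$; one checks idempotency, commutativity and the quasigroup property directly, and notes $i \circ j \notin \{i,j\}$ for $i \neq j$. Bose's construction then handles $n = 3m$ with $m$ odd, i.e. $n \equiv 3 \pmod 6$: on the point set $\mathbb{Z}_m \times \{1,2,3\}$ take the vertical triples $\{(i,1),(i,2),(i,3)\}$ together with, for each $i \neq j$ and $\ell = i \circ j$, the three triples $\{(i,1),(j,1),(\ell,2)\}$, $\{(i,2),(j,2),(\ell,3)\}$, $\{(i,3),(j,3),(\ell,1)\}$. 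The verification splits by cases: same-column pairs are covered by the unique vertical triple; same-layer pairs $(i,s),(j,s)$ are covered by the unique cross triple with $\{a,b\}=\{i,j\}$; and a mixed-layer pair $(i,s),(j,t)$ with $i \neq j$ lies in exactly one cross triple because the two layers $s,t$ determine uniquely which point is the ``apex'' and then the quasigroup equation $i \circ x = j$ (or $j \circ x = i$) has a unique solution. I expect this last bookkeeping to be the main obstacle: showing that every mixed-layer pair is hit \emph{exactly} once leans on uniqueness of solutions in the quasigroup and on idempotency preventing clashes with the vertical triples.

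For the remaining class $n \equiv 1 \pmod 6$ I would follow Skolem's variant: write $n = 6k+1$, adjoin one extra point $\infty$ to $\mathbb{Z}_{2k} \times \{1,2,3\}$, and replace the idempotent quasigroup by a \emph{half-idempotent} commutative quasigroup of even order $2k$, whose diagonal behaviour is tuned so that $\infty$ can absorb exactly the pairs no longer covered by genuine vertical triples. The triple system is assembled in analogy with Bose's, with a distinguished family of triples through $\infty$ replacing some vertical ones. The degenerate small cases ($n = 1$, and $n = 3$ as $m = 1$ in Bose) are immediate. I anticipate that the hard step is again an exhaustive pair-covering check, now complicated by the even order and the special status of $\infty$; this is precisely the delicate feature that forces the two residue classes to be handled by different constructions rather than a single uniform one.
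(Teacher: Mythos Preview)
Your sketch is a correct outline of the classical argument, but there is nothing to compare it to: the paper does not prove Theorem~\ref{steiner}. It is stated as a known background result, attributed to Kirkman, Bose and Skolem via citations, and is used only as a black box (to deduce $n\ge 7$ in the proof of Theorem~\ref{Model4i1j21}, and in the corollary following Theorem~\ref{steine}). So your proposal is not so much a different route as a route where the paper deliberately takes none.

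That said, the content of your sketch is sound. The necessity direction is the standard double count and is complete as written. For sufficiency you follow exactly the constructions the paper's citations point to: Bose's commutative idempotent quasigroup construction on $\mathbb{Z}_m\times\{1,2,3\}$ for $n=3m$ with $m$ odd, and Skolem's half-idempotent variant with an adjoined point $\infty$ for $n\equiv 1\pmod 6$. The one place I would tighten is the $n\equiv 1\pmod 6$ case: ``half-idempotent commutative quasigroup'' and ``triples through $\infty$ replacing some vertical ones'' is gesturing at the right object, but you have not actually written down the quasigroup (e.g.\ on $\mathbb{Z}_{2k}$ one uses $a\circ b$ built from inverting $2$ modulo the odd part and handling the even part so that the diagonal hits each value exactly twice) or the triple list, so the pair-covering verification you flag as ``the hard step'' is not yet checkable from what you wrote. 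If you intend this as more than a citation, that case needs to be spelled out; if you intend it as a citation, then you are doing exactly what the paper does.
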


\begin{thm} (Ray-Chaudhuri, Wilson \cite{RW1971}) If $n=6k+3$, then there exists a Steiner triple system that can be decomposed into $3k+1$ complete matchings.

\end{thm}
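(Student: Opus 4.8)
My plan is to prove the slightly stronger and more natural statement that a \emph{resolvable} Steiner triple system (a Kirkman triple system) on $[n]$ exists precisely when $n=6k+3$; the theorem is the ``if'' direction. First I would record the arithmetic that makes the count plausible: an STS on $v=6k+3$ points has $\binom{v}{2}/3=(2k+1)(3k+1)$ triples, each complete matching consists of $v/3=2k+1$ pairwise disjoint triples, and since $3\mid v$ any resolution must have $(2k+1)(3k+1)/(2k+1)=3k+1$ matchings, exactly the number asserted. The same divisibility shows $3\mid v$ is necessary for resolvability, which together with Theorem~\ref{steiner} pins the admissible orders down to $v\equiv 3\pmod 6$. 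The substance is sufficiency, and I would attack it by a recursive scheme: construct a small stock of base systems by hand and propagate resolvability to all larger admissible orders through explicit product-type constructions.

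The cleanest building block, which I would present in full, is the tripling construction $v\mapsto 3v$. Given a Kirkman triple system on $V$ with $|V|=v=6k+3$ resolved into parallel classes $P_1,\dots,P_{3k+1}$, I take the new point set $V\times\mathbb{Z}_3$, viewed as $v$ groups $\{x\}\times\mathbb{Z}_3$ of size three. The triples fall into two types. The $v$ ``vertical'' triples $\{x\}\times\mathbb{Z}_3$ cover exactly the within-group pairs and together form a single parallel class. For each original triple $\{x,y,z\}$ I place on $\{x,y,z\}\times\mathbb{Z}_3$ the nine transversal triples of a resolvable transversal design $\mathrm{TD}(3,3)$ (equivalently, a pair of orthogonal Latin squares of order $3$, which exist), covering each of the $27$ cross-group pairs exactly once and splitting into three partial parallel classes of that nine-point set. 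Because the $v/3$ triples of a fixed original class $P_\ell$ partition $V$, selecting the $i$-th transversal class over each of them yields a genuine parallel class of $V\times\mathbb{Z}_3$; hence $P_\ell$ contributes $3$ classes, the $3k+1$ original classes contribute $3(3k+1)=9k+3$, and adding the vertical class gives $9k+4=3(3k+1)+1$ classes on $3v=6(3k+1)+3$ points, exactly as required. Starting from the trivial system on $3$ points, this alone produces Kirkman triple systems of every order $3^m$.

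Powers of three are far from all orders $\equiv 3\pmod 6$, and covering the remaining residues is where the difficulty concentrates: one needs a coordinated family of recursive constructions based on resolvable group-divisible designs, frames, and transversal designs built from mutually orthogonal Latin squares of larger order, together with direct constructions for a finite list of small ``exceptional'' orders that the recursions fail to reach. I would organize this by proving that the set of admissible resolvable orders is closed under the relevant product operations, so that it suffices to seed finitely many generators, and then discharge the small base orders by explicit, computer-free designs. Verifying that these constructions collectively sweep out the entire residue class, and that each exceptional small case genuinely admits a resolvable system, is the substantive obstacle—this bookkeeping, ensuring no admissible order is missed, is the part I expect to be most delicate, and it is precisely the content of the Ray-Chaudhuri--Wilson theorem proper.
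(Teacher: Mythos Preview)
The paper does not prove this theorem at all: it is quoted in Section~2 as a known background result from the literature (attributed to Ray-Chaudhuri and Wilson, reference~\cite{RW1971}), alongside other classical facts such as Theorem~\ref{steiner}. There is therefore no ``paper's own proof'' to compare your proposal against; the authors simply invoke the result as a black box, and only use it later in the concluding remarks (and implicitly in the discussion around Model~4).

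As for your sketch on its own merits: the arithmetic and the tripling construction $v\mapsto 3v$ via a resolvable $\mathrm{TD}(3,3)$ are correct and standard, and you are right that this yields Kirkman triple systems of all orders $3^m$. You are also honest that the real work---covering every $v\equiv 3\pmod 6$ via a PBD-closure/frame argument together with a finite list of directly constructed small orders---is where the difficulty lies, and you do not actually carry it out. So what you have written is a correct high-level outline, but not a proof; it stops precisely where the Ray-Chaudhuri--Wilson argument begins in earnest. For the purposes of this paper, however, no proof is expected: the statement is a citation, and you could simply reference~\cite{RW1971} as the authors do.
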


\section{General introduction to the models and Model 1}

In this section we give a general introduction to our models and start our investigations.

\medskip

In all our models we have:

$\bullet$ an input set of $n$ smart elements, and one of them is defective. 

$\bullet$ Model 1-4 are non-adaptive models, so Questioner needs to construct a family $\cF \subseteq 2^{[n]}$ of tests at the beginning. 

$\bullet$ A test is a subset $F\subseteq [n]$ corresponding to a query of the following type: 'is the defective an element of $F$?', and the answer is NO if $F$ does not contain the defective and YES, if it contains the defective. 

$\bullet$ As we mentioned all the elements are smart elements in all the models, so for a test $F$ every element of $F$ knows the answer in addition to Questioner. 

$\bullet$ In each model we assume that knowing all the answers is enough information for Questioner to find the defective element, i.e. $\cF$ is separating. 

$\bullet$ The main difference between Model 1-4 is what we want the elements to find out. Using only the information available to them, i.e. the answers to the queries containing them, we can require that they find out something about the defective element, or oppositely, that they cannot find out something. We will indicate the aim as the \textbf{property} of a certain model.

$\bullet$ We say that $\cF \subseteq 2^{[n]}$ \textbf{solves} that model if the property of the model is reached by asking elements of $\cF$.

\vspace{2mm}

In each of the following models we first give a property describing  what the elements should know, and then we examine if there is a query family that solves that specific model or state results about the cardinality of such query families. First we consider the models where we require the elements to find out something about the defective (like the model by Tapolcai et al.~\cite{TRHHS2014,TRHGHS2016} that initiated the research is of this type). Then we consider the models where we require some information to remain hidden from the elements. Finally we mix these types of properties in Model 4.

\subsection{Model 1}

The most natural model is the following:

\vspace{2mm}

\textbf{Property:} all elements know (each about itself) if they are defective.

\vspace{2mm}

\noindent
It is easy to see that this property is equivalent to the following: for every two different $x,y \in [n]$ there is a set $F\in\cF$ such that $x\in F$, $y\not\in F$, i.e. $\cF$ is completely separating. By Theorem \ref{compsep} we immediately have:

\vspace{2mm}

\begin{prop}

For any $n \ge 1$ there is $\mathcal{F}_n \subseteq 2^{[n]}$ that solves Model 1 with:
$$|\mathcal{F}_n| \le \lceil \log_2 n + \frac{1}{2} \log_2 \log_2 n \rceil.$$

\end{prop}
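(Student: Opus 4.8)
The plan is to reduce the statement to Theorem~\ref{compsep} via the equivalence already recorded above: a family $\cF$ solves Model~1 if and only if it is completely separating. Once this equivalence is in place, the proposition follows by directly substituting into Spencer's bound, so the only genuine content is to pin down that equivalence carefully; the size estimate is then immediate.

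First I would verify the equivalence by reasoning about what a single smart element $x$ can observe. By smartness, $x$ sees exactly the answers to the queries in $\cF_x = \{F \in \cF : x \in F\}$. If $x$ is itself the defective, every $F \in \cF_x$ returns YES; if instead some $y \neq x$ is defective, then a set $F \in \cF_x$ returns YES precisely when $y \in F$. Hence $x$ fails to distinguish the scenario ``$x$ defective'' from ``$y$ defective'' exactly when every member of $\cF_x$ also contains $y$, that is, when $\cF_x \subseteq \cF_y$. Therefore $x$ always knows whether it is defective if and only if no $y \neq x$ satisfies $\cF_x \subseteq \cF_y$, equivalently there is some $F \in \cF$ with $x \in F$ and $y \notin F$ for each $y \neq x$. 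Quantifying over all $x \in [n]$, this is exactly the completely separating condition.

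With the equivalence established, I would invoke Theorem~\ref{compsep}, which guarantees for every $n \ge 1$ a completely separating family $\cF_n \subseteq 2^{[n]}$ with $|\cF_n| \le \lceil \log_2 n + \tfrac{1}{2}\log_2 \log_2 n \rceil$. By the equivalence such an $\cF_n$ solves Model~1, and its cardinality meets the claimed upper bound, which completes the proof.

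As for difficulty, there is no real obstacle here, since Theorem~\ref{compsep} carries all the combinatorial weight. The one point requiring care is the case analysis establishing the equivalence: one must ensure that ``each element knowing about itself'' is captured precisely by the asymmetric (ordered-pair) separation condition $x \in F,\ y \notin F$, rather than the weaker symmetric separation used in the classical model, where it suffices that \emph{some} set distinguishes $x$ from $y$ in either direction.
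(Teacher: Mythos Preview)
Your proof is correct and follows exactly the paper's approach: establish that solving Model~1 is equivalent to being completely separating, then invoke Theorem~\ref{compsep}. The paper in fact treats this proposition as immediate from that equivalence and Spencer's bound, so your write-up is, if anything, more detailed than what the paper provides.
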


\section{Model 2}

This model is the abstract version of the node failure localization model introduced by Tapolcai et al.~\cite{TRHHS2014,TRHGHS2016}.

\vspace{2mm}

\textbf{Property:} all elements know the defective.

\vspace{2mm}

\noindent
Lenger \cite{L2016} proved that there is $\mathcal{F}_n \subseteq 2^{[n]}$ that solves Model 2 with $|\mathcal{F}_n| \le 3 \log_3 n$ (that is a better upper bound than the ones in~\cite{TRHHS2014,TRHGHS2016}. However we note again the latter results are about non-abstract cases.). In the following (see Corollary \ref{cormodel3}) we prove an asymptotically sharp result on the minimal cardinality of the solutions of Model 2. 

To reach that result first we characterize the query families that solve Model 2. 

\begin{thm}\label{model3} $\cF_n \subseteq 2^{[n]}$ solves Model 2 if and only if its dual is Sperner and intersection-cancellative.

\end{thm}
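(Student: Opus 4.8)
The plan is to convert ``$\cF$ solves Model 2'' into a single concrete condition, push that condition through the incidence relation into the dual, and then match the two pieces it splits into against the Sperner and intersection-cancellative properties. First I would unwind the definitions: by $\bullet_2$ an element $x$ knows the defective exactly when $\cF_x$ is separating, so $\cF$ solves Model 2 iff $\cF_x$ is separating for every $x\in[n]$. Writing $G_a:=\cF_a=\{F\in\cF:a\in F\}$ for the rows of the incidence relation (so the dual is $\cF'=\{G_a:a\in[n]\}$ on ground set $\cF$), the subfamily $\cF_x=G_x$ separates a pair $y\neq z$ iff some $F$ contains $x$ and exactly one of $y,z$, i.e. iff $G_x\cap G_y\neq G_x\cap G_z$. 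Hence I would record the reformulation: $\cF$ solves Model 2 iff
$$ G_x\cap G_y\neq G_x\cap G_z \quad\text{for all } x\in[n] \text{ and all distinct } y,z\in[n]. \qquad (\star) $$
The underlying observation is that, running over candidate defectives $d$, the set of YES-answers that $x$ sees among the queries $\cF_x$ is exactly $G_x\cap G_d$, so $x$ can identify the defective iff $d\mapsto G_x\cap G_d$ is injective, which is $(\star)$.

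Next I would prove $(\star)\iff$ [dual Sperner and intersection-cancellative] by splitting on whether $x$ coincides with $y$ or $z$. The instances with $x=y$ read $G_y\neq G_y\cap G_z$, i.e. $G_y\not\subseteq G_z$; ranging over all distinct $y,z$ this says precisely that $\cF$ is completely separating, which by Theorem~\ref{compsep} is equivalent to the dual being Sperner. The remaining instances, with $x,y,z$ pairwise distinct, assert $G_x\cap G_y\neq G_x\cap G_z$ for distinct members of the dual, which is exactly the contrapositive of intersection-cancellativity. Conversely, the Sperner (equivalently, completely separating) property supplies the instances of $(\star)$ with $x\in\{y,z\}$, while intersection-cancellativity supplies those with $x,y,z$ distinct; together they yield all of $(\star)$, and hence the full equivalence.

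The one place requiring care is the reading of the dual properties on repeated rows, and I expect this bookkeeping, rather than any substantive combinatorics, to be the only real obstacle. Intersection-cancellativity must be understood over distinct triples of members: if coincidences were allowed, taking $F_1=F_2$ in the definition would already force the Sperner condition and make it redundant, so I would confirm that under the intended reading the two properties partition the instances of $(\star)$ exactly as above. Likewise one must treat the dual as the family genuinely indexed by $[n]$, so that $G_y=G_z$ for $y\neq z$ counts as a degenerate containment violating Sperner, consistent with the way Spencer's theorem is used in Theorem~\ref{compsep}; equivalently, one invokes the standing assumption that $\cF$ is separating, which forces the $G_a$ to be pairwise distinct. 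This is what excludes the degenerate case, for instance $\cF=\{[n]\}$, whose dual is a single set that is trivially Sperner and intersection-cancellative but which does not solve Model 2.
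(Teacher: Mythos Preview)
Your argument is correct and follows the same overall strategy as the paper: both split the solving condition into the completely-separating piece (dual Sperner, via Spencer) and a ``triple'' piece (dual intersection-cancellative). The difference is in execution. The paper first proves a primal characterization (its Lemma~\ref{char}): $\cF$ solves Model~2 iff $\cF$ is completely separating and, for all pairwise distinct $a,b,c$, some $F$ contains $a$ and exactly one of $b,c$. It then translates the second property to the dual as ``$G_a\cap G_b\not\subseteq G_c$ or $G_a\cap G_c\not\subseteq G_b$'', observes that ranging over permutations this says at least two of the three non-containments $G_a\cap G_b\not\subseteq G_c$, $G_a\cap G_c\not\subseteq G_b$, $G_b\cap G_c\not\subseteq G_a$ hold, and proves a separate claim that this ``two of three'' condition is equivalent to intersection-cancellativity. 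You bypass this detour by passing to the dual immediately and phrasing the whole condition as $G_x\cap G_y\neq G_x\cap G_z$, from which both pieces are read off by the case split $x\in\{y,z\}$ versus $x,y,z$ distinct; intersection-cancellativity then falls out directly from its definition rather than via the auxiliary ``two of three'' reformulation. Your explicit handling of the bookkeeping about repeated rows (the indexed-dual reading of Sperner and the distinct-triples reading of intersection-cancellativity) is also cleaner than the paper's implicit treatment.
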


\begin{proof}[Proof of Theorem \ref{model3}]

We start the proof with the following easy lemma that gives some characterization of the query families that solve Model 2. 

But before that we introduce the following notion: we say $x$ \textit{distinguishes} between $y$ and $z$ if in case $y$ or $z$ is the defective, $x$ can tell which one it is, using the answers to the queries containing $x$. Equivalently, there is a query that contains $x$ and exactly one of $y$ and $z$.

\begin{lemma}\label{char}

$\mathcal{F} \subseteq 2^{[n]}$ solves Model 2 if and only if the following two properties hold:

\vspace{2mm}

$\bullet_1$ $\mathcal{F}$ is completely separating, and 

\vspace{2mm}

$\bullet_2$ for all pairwise different $a,b,c \in [n]$ there is $F \in \mathcal{F}$ with $a,b \in F$ and $c \not \in F$ or with 

\hspace{4mm} $a,c \in F$ and $b \not \in F$.

\end{lemma}

\begin{proof}[Proof of Lemma \ref{char}]

We prove by contradiction. 

\vspace{2mm}

First suppose that $\bullet_1$ is not true. So there are two different elements $a,b \in [n]$ such that for all $F \in \mathcal {F}$ if $a \in F$, then $b \in F$. In this case Adversary answers YES for all queries that contain $a$ and $a$ will not be able to distinguish $a$ and $b$ and decide whether $a$ or $b$ is the defective.

If $\bullet_2$ is not true, then there are three different $a,b,c \in [n]$ such that for all $F \in \mathcal{F}$ if $a,b \in F$, then $c \in F$ and if $a,c \in F$, then $b \in F$. If Adversary answers YES for all queries that contain $a,b$ and $c$, then $a$ will not be able to decide whether $b$ or $c$ is the defective.

\vspace{2mm}

To prove the other direction first observe that by $\bullet_1$ only the defective element gets YES answer for all the queries containing it. Thus any other element knows that he is not a defective (getting at least one NO answer (for a query containing it)). However by $\bullet_2$ he can decide who is the defective. Indeed he can consider the intersection of all the queries that were answered YES and contained him. There is exactly one other element in the intersection, and that is the defective).

\end{proof}

Now we translate the properties of $\mathcal{F}$ given in Lemma \ref{char} for the properties of the dual of $\mathcal{F}$.

\begin{lemma}\label{dual}

$\mathcal{F} \subseteq 2^{[n]}$ satisfies properties $\bullet_1$ and $\bullet_2$ if and only if its dual is Sperner and intersection cancellative.

\end{lemma}

\begin{proof} The fact that the dual of a completely separating system (property $\bullet_1$ of Lemma \ref{char}) is Sperner was proved in \cite{S1970}  (as we mentioned it earlier in Theorem \ref{compsep}).

Therefore it is enough to prove that the dual of a family with property $\bullet_2$ of Lemma \ref{char} is cancellative. Property $\bullet_2$ means that for any three different sets $A,B,C$ in the dual there is an element $f$ (corresponding to $F$) such that either $f\in A$, $f\in B$ and $f\not\in C$ or $f\in A$, $f\in C$ and $f\not\in B$. This means either $f\in A\cap B\setminus C$ or $f\in A\cap C\setminus B$. The existence of $f$ means either $A \cap B \not \subseteq C$ or $A \cap C \not \subseteq B$. Let us define three properties.

\vspace{1mm}
$\circ_1$ $A \cap B \not \subseteq C$.

\vspace{1mm}

$\circ_2$ $A \cap C \not \subseteq B$.

\vspace{1mm}

$\circ_3$ $C \cap B \not \subseteq A$.

\vspace{1mm}

Property $\bullet_2$ (for these three sets in this order) means that at least one of $\circ_1$ and $\circ_2$ holds. Considering the same three sets in different orders we get that also at least one of $\circ_1$ and $\circ_3$ and one of $\circ_3$ and $\circ_2$ holds. It is true if and only if at least two of these three properties hold.

\vspace{2mm}

\noindent
To finish the proof of Lemma \ref{char} we prove the following:

\begin{clm}\label{canc}

$\mathcal{F}' \subseteq 2^{[n]}$ is intersection cancellative if and only if at least two out of $\circ_1, \circ_2$ and $\circ_3$ hold for any three members of it. 

\end{clm}

\begin{proof}

Let us assume $\cF'$ is intersection cancellative and let $A,B,C \in \mathcal{F}'$. Let us assume at most one, say $\circ_3$ of the three properties holds, thus $\circ_1$ and $\circ_2$ do not hold. The first one implies $A\cap B \subseteq C$, and obviously $A\cap B\subseteq A$. Thus we have $A\cap B \subseteq A\cap C$. Similarly the second one implies $A\cap C \subseteq A\cap B$, hence they together imply $A\cap C=A\cap B$, which contradicts the intersection cancellative property and our assumption that $A,B,C$ are three different sets.

Let us assume now that $\cF'$ is not intersection cancellative, thus we have $A\cap B=A\cap C$. This implies both $A\cap B \subseteq C$ and $A\cap C\subseteq B$, thus at most one of  $\circ_1, \circ_2$ and $\circ_3$ can hold.

\end{proof}

We are done with the proof of Lemma \ref{dual}.

\end{proof}

By Lemma \ref{char} and Lemma \ref{dual} we are done with the proof of Theorem \ref{model3}.

\end{proof}

\noindent With the help of the previous theorem we can prove the following:

\begin{cor}\label{cormodel3} Suppose $\mathcal{F}_n \subseteq 2^{n}$ solves Model 2 and has minimal cardinality. Then we have
$$\lim_{n \rightarrow \infty} \frac{|\mathcal{F}_n|}{\log_2 n} = \log_{(3/2)}2 \ (\approx 1.70951).$$

\end{cor}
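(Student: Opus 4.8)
The plan is to translate, via Theorem \ref{model3} and Fact \ref{intcan}, the extremal problem for Model 2 into one about cancellative families, and then apply the sharp bounds of Frankl--F\"uredi (Theorem \ref{ffcan}) and Tolhuizen (Theorem \ref{tcan}). Write $q(n)=|\cF_n|$ for the minimal number of queries. Given a minimal solution $\cF_n$ with $m=|\cF_n|$, pass to the dual $\cF_n'$, which is a family of exactly $n$ distinct subsets of a ground set of size $m$ (distinctness holding because $\cF_n$ is separating). By Theorem \ref{model3}, $\cF_n$ solves Model 2 iff $\cF_n'$ is Sperner and intersection-cancellative, and by Fact \ref{intcan} the latter is equivalent to $(\cF_n')^c$ being a cancellative family of $n$ subsets of $[m]$. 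Thus, up to the harmless identification discussed at the end, $q(n)$ equals the least $m$ for which a Sperner cancellative family of $n$ subsets of $[m]$ exists.

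For the lower bound $\liminf q(n)/\log_2 n \ge \log_{(3/2)}2$: since $(\cF_n')^c$ is a cancellative family of $n$ sets on $[m]$, Theorem \ref{ffcan} gives $n \le m\,(3/2)^m$ once $m\ge 14$. Taking base-$2$ logarithms yields $\log_2 n \le \log_2 m + m\log_2(3/2)$. As $n\to\infty$ the inequality $n\le m(3/2)^m$ forces $m\to\infty$, so $\log_2 m = o(m)$ and hence $m \ge (1-o(1))\,\frac{\log_2 n}{\log_2(3/2)} = (1-o(1))\,\log_{(3/2)}2\cdot\log_2 n$. Note that only intersection-cancellativity, not the Sperner property, is used here.

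For the matching upper bound I would construct, for every $\epsilon>0$ and all large $n$, a solution with at most $(1+\epsilon)\log_{(3/2)}2\cdot\log_2 n$ queries. Fix $m=\lceil (1+\epsilon)\log_{(3/2)}2\cdot\log_2 n\rceil$. By Theorem \ref{tcan} there is a cancellative family $\cG_m\subseteq 2^{[m]}$ with $\log_2|\cG_m| = m\log_2(3/2)+o(m)$. The key step is to make this family Sperner at no asymptotic cost: by pigeonhole some level $\binom{[m]}{k}$ contains at least $|\cG_m|/(m+1)$ members of $\cG_m$, and this uniform subfamily $\cH_m$ is automatically Sperner while remaining cancellative (cancellativity is hereditary). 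Since $\log_2|\cH_m| = m\log_2(3/2)+o(m) \ge \log_2 n$ for large $n$ by the choice of $m$, we get $|\cH_m|\ge n$. Now $\cH_m^c$ is Sperner (complementation preserves Sperner) and intersection-cancellative (Fact \ref{intcan}); choosing any $n$ of its members as the dual of a query family and invoking Theorem \ref{model3} (both properties being inherited by subfamilies) yields a Model 2 solution with at most $m$ queries. Letting $\epsilon\to 0$ gives $\limsup q(n)/\log_2 n \le \log_{(3/2)}2$, which together with the lower bound proves the claim.

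The main obstacle, and the only place real care is needed, is the Sperner constraint: Theorems \ref{ffcan} and \ref{tcan} speak only of cancellative families, so one must check it can be imposed for free. The uniform-subfamily pigeonhole resolves this, since restricting to a single level costs only a factor $m+1$, negligible against the exponential $2^{m\log_2(3/2)}$. A secondary bookkeeping point is realizing a prescribed family of $n$ subsets of $[m]$ as the dual of a concrete query family: if two coordinates are never separated the corresponding queries coincide, which only decreases the number of queries, and merging such redundant coordinates preserves both the Sperner and the intersection-cancellative properties, so the resulting $\cF_n$ still solves Model 2 with at most $m$ queries.
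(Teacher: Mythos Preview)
Your proof is correct and follows essentially the same approach as the paper: reduce via Theorem \ref{model3} and Fact \ref{intcan} to a question about Sperner cancellative families on $[m]$, then apply Frankl--F\"uredi (Theorem \ref{ffcan}) for one inequality and Tolhuizen (Theorem \ref{tcan}) for the other. For the Sperner constraint the paper's primary argument is that Tolhuizen's construction is already Sperner (via the identifying-set description), but the paper also explicitly records your pigeonhole-to-a-single-level argument as an equivalent alternative, so the two write-ups differ only in which of these two routes is emphasized.
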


\begin{rem}
This result provides an improvement of the results of Theorem 1 of \cite{TRHHS2014} and \cite{TRHGHS2016}.Tapolcai et al.~\cite{TRHHS2014,TRHGHS2016} proved that $1.62088\log_2 n$ queries are needed in the abstract setting, and gave examples of graphs where $2\log_2 n$ monitoring trails are needed. Here we improve their lower bound, and show that at least $\log_{(3/2)}2 \log_2n\ge 1.70951\log_2 n$ queries are needed, and this bound is asymptotically sharp in the abstract case.

\end{rem}

\begin{proof}[Proof of Corollary \ref{cormodel3}]

First note that by Theorem \ref{model3} and Fact \ref{intcan} we have that $\cF_n \subseteq 2^{[n]}$ solves Model 2 if and only if the complement of its dual is Sperner and cancellative. Now the upper bound

$$\limsup_{n \rightarrow \infty} \frac{|\mathcal{F}_n|}{\log_2 n} \le \log_{(3/2)}2$$
follows from Theorem \ref{ffcan}. Note that we do not use that $\cF_n$ is also Sperner.

\vspace{2mm}

Now we start to work towards the lower bound. Theorem \ref{tcan} gives a large (not necessarily Sperner) cancellative family. However, a more careful analysis of Tolhuizen's proof \cite{T00} shows that the family given there is Sperner. We just give a sketch here as it introduces a lot of new definitions. 

A set $X\subseteq [n]$ is an \textit{identifying set} for a family $\cG\subseteq 2^{[n]}$  if for any members $G,G'\in\cG$ there exists $x\in X$ such that either $x\in G\setminus G'$ or $x\in G'\setminus G$. Tolhuizen proved that for any family $\cG$ the family of sets that are both members of $\cG$ and identifying sets for $\cG$ is intersection cancellative. To get a large intersection cancellative family he used codes and constructed a family $\cG$ that contained many sets that were also identifying sets for $\cG$. Observe that if $A\subseteq B$ with $A,B\in \cG$, then $A$ cannot be an identifying set, as elements of it cannot be in $A\setminus B$ nor in $B\setminus A$. This implies the resulting intersection cancellative family is also Sperner. Thus we have

$$\liminf_{n \rightarrow \infty} \frac{|\mathcal{F}_n|}{\log_2 n} \ge \log_{(3/2)}2.$$

\vspace{2mm}

We saw that Tolhuizen's construction is Sperner, however we note that even starting from a large cancellative family that is not Sperner, we could consider the largest subfamily of it that consists of sets of the same size. The resulting Sperner family would still be large enough to give the same asymptotic result.

\end{proof}

\section{Model 3}

In this model Questioner wants to find the defective such a way that its identity is hidden from the participants themselves.

\vspace{2mm}

\textbf{Property}: no element knows the defective.

\begin{prop}

No $\mathcal{F}$ can solve Model 3.

\end{prop}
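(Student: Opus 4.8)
The plan is to argue by contradiction and to work entirely in the dual $\cF'=\{\cF_a : a\in[n]\}$. Suppose, for contradiction, that some $\cF$ solves Model 3. By the standing assumption $\cF$ is separating, so the map $a\mapsto\cF_a$ is injective and the members of $\cF'$ are $n$ pairwise distinct subsets of $\cF$; I would order them by inclusion. The property of Model 3 is exactly that every element $x$ satisfies $\bullet_3$, i.e.\ it \emph{never} knows the defective: for every $y\in[n]$ there is a distinct $z$ lying in exactly the same members of $\cF_x$ as $y$.

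The crucial step is to apply this property to an element against \emph{itself}, taking $y=x$. Every member of $\cF_x$ contains $x$ by definition, so $x$ lies in all of $\cF_x$; hence the twin $z\ne x$ guaranteed by $\bullet_3$ must also lie in every member of $\cF_x$. Translated to the dual this says $\cF_x\subseteq\cF_z$, and since $\cF_x\ne\cF_z$ by separation (here one uses $n\ge 2$, so that a genuinely distinct $z$ is required), one gets $\cF_x\subsetneq\cF_z$. Thus every member of the dual poset is strictly contained in another member.

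To finish I would invoke finiteness: the poset $\cF'$ is finite and nonempty, so it has a maximal element $\cF_m$. Applying the previous step to $x=m$ produces some $z$ with $\cF_m\subsetneq\cF_z$, contradicting maximality. This completes the argument.

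I expect the only real obstacle to be conceptual rather than computational: one must read the model's property as the strong statement $\bullet_3$ (the element can \emph{never} identify the defective) rather than the superficially similar but strictly weaker ``$\cF_x$ is not separating''. The latter does \emph{not} make the problem impossible --- for instance $\cF=\{\{1\},\{2\}\}$ on $[3]$ is separating while no $\cF_x$ is separating, yet if the defective is $1$ then element $1$ sees its unique query $\{1\}$ answered YES and immediately identifies the defective. The self-application $y=x$, which forces genuine strict containment in the dual, is the small trick that turns $\bullet_3$ into the clean ``no maximal element'' contradiction.
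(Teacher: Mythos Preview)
Your argument is correct and essentially identical to the paper's: both pick an element $x$ with $\cF_x$ inclusion-wise maximal among $\{\cF_a : a\in[n]\}$ and observe that if $x$ itself were the defective, any $z$ indistinguishable from $x$ in $x$'s view would satisfy $\cF_x\subseteq\cF_z$, contradicting maximality together with separation. The paper phrases the key step as ``if $x$ is the defective then $x$ knows it'' rather than via the explicit self-application $y=x$ of $\bullet_3$, but this is only a difference in wording.
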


\begin{proof}
Recall that we always assume that Questioner can find the defective, i.e.~$\cF$ is separating. Let us consider the families $\cF_x$ ($x \in [n]$) and choose an element $x$ such that $\cF_x$ is inclusion-wise maximal among these families. We claim that if $x$ is the defective, then he knows that. Indeed, $x$ gets only YES answers. Suppose by contradiction that $y$ could also be the defective according to $x$, then we would have $\cF_y\supseteq \cF_x$, which implies $\cF_y= \cF_x$. However it is impossible, as $\cF$ is separating.

\end{proof}

\subsection{Model 3'}

As Model 3 is impossible to solve, in the next model the defective himself may find out he is the defective, but nobody else (note that we assume that knowing all the answers is enough to find the defective).

\vspace{2mm}

\textbf{Property:} no element knows the defective, except for the defective one.

\vspace{2mm}

\noindent
Opposed to Model 3, this is easily achievable: we can ask all (or all but one) of the singletons. So a natural question that arises here is the cardinality of the smallest family that can solve Model 3'. In the next theorem we give an upper bound on this quantity.

\begin{thm}\label{model3'}

For every $n\ge 1$ there is $\cF_n\subseteq 2^n$ that solves Model 3' with $$|\cF_n |\le 3\lceil\log_3 n \rceil -t(n),$$ where $t(n)$ is the number of zeros in $n$ written in ternary base.
\end{thm}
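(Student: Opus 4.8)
The plan is to build $\cF_n$ recursively on the ternary expansion of $n$, tripling the ground set at each step. Suppose $\cG\subseteq 2^{[m]}$ already solves Model 3'. I form the new ground set $[m]\times\{0,1,2\}$ (a ``blow-up'' replacing each element $y$ by a triple of copies $(y,0),(y,1),(y,2)$) and take as queries the lifted sets $\tilde G:=G\times\{0,1,2\}$ for $G\in\cG$, together with the three \emph{selector} queries $T_j:=[m]\times\{j\}$ for $j\in\{0,1,2\}$. The point of this design is that the family of queries seen by an observer $(x,i)$ is exactly $\{\tilde G: G\in\cG_x\}\cup\{T_i\}$, so the trace of an arbitrary element $(u,\ell)$ on $\cF_{(x,i)}$ splits as a pair: the trace of $u$ on $\cG_x$ (which is independent of $\ell$), together with the single bit recording whether $\ell=i$.

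With this description the two requirements follow by a short case analysis. Separation is immediate: two copies of the same $y$ are separated by a selector, and copies of distinct $y,y'$ are separated by some $\tilde G$ because $\cG$ is separating. For the Model 3' property I fix an observer $(x,i)$ and a defective $(y,b)\neq(x,i)$ and exhibit a second element with the same trace. If $y\neq x$, the Model 3' property of $\cG$ supplies a $z\neq y$ with the same trace on $\cG_x$, and then $(z,b)$ works. If $y=x$ (so $b\neq i$), the defective has all-\textsc{yes} trace on the $\tilde G$-part, and I match it by a third copy $(x,\ell)$ with $\ell\notin\{i,b\}$: since there are three values of $\ell$ such a copy exists, its $\tilde G$-trace is again all-\textsc{yes}, and its selector bit $[\ell=i]$ is \textsc{no}, exactly as for $(x,b)$. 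This ``third value'' trick is the mechanism that makes the ternary (rather than binary) blow-up the right one.

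To obtain the $-t(n)$ saving I observe that when the blow-up is clean, i.e. $n=3m$, one selector may be discarded. Dropping $T_0$ affects only the observers $(x,0)$, who then see no selector at all; for such an observer the trace of $(u,\ell)$ no longer depends on $\ell$, so every element automatically has a partner $(u,\ell')$ of the same trace and the Model 3' property holds for them trivially, while the analysis above is unchanged for observers $(x,i)$ with $i\neq0$. Thus a clean tripling costs only two queries, whereas a step that must also create leftover elements (when the last ternary digit of $n$ equals $1$ or $2$) costs three. Writing $f(n)$ for the size produced, this gives the recursion $f(n)\le f(\lfloor n/3\rfloor)+3-[\,3\mid n\,]$, which telescopes along the ternary digits of $n$: one pays $3$ per recursion step and refunds a unit at each zero digit. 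Since the leading digit is nonzero, every zero of $n$ is refunded exactly once, and bounding the number of steps by $\lceil\log_3 n\rceil$ yields $|\cF_n|\le 3\lceil\log_3 n\rceil-t(n)$.

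The step I expect to be the main obstacle is the treatment of the nonzero trailing digits, that is, the leftover $r\in\{1,2\}$ elements that arise when $n$ is not a multiple of $3$. Such elements receive fewer than three copies, and a copy-deficient element is precisely the configuration that threatens the ``third value'' trick: its two surviving copies can be told apart by the selectors and hence become identifiable, which would violate Model 3'. The construction must therefore place the leftover elements so that their traces remain shared with some other element — for instance by attaching them to a block in which the relevant selector is absent — and it is this bookkeeping, together with pinning the count against $3\lceil\log_3 n\rceil - t(n)$ and disposing of the small base cases (where, notably, $n=2$ admits no solution at all), that forms the technical heart of the argument.
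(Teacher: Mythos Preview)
Your approach—ternary blow-up plus two or three transversal ``selectors''—is exactly the paper's. Your analysis of the clean case $3\mid n$ (drop one selector, cost $2$) and the resulting recursion $f(n)\le f(\lfloor n/3\rfloor)+3-[3\mid n]$ are correct and match the paper.

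The gap is precisely where you flag it. When $3\nmid n$ you must absorb one or two leftover elements, and your suggestion to ``attach them to a block in which the relevant selector is absent'' is too vague to evaluate; taken literally it conflicts with the fact that in this branch you keep all three selectors. The paper's remedy is concrete and simple: make one (if $n\equiv 1\pmod 3$) or two (if $n\equiv 2\pmod 3$) of the blocks $A_x$ have size \emph{four} rather than three, with the extra element lying in \emph{no} selector at all (so each $A_x$ still meets each $B_i$ in exactly one point). That extra element then sees only blown-up queries and hence cannot distinguish the copies inside any block; conversely any other observer inside a size-$4$ block still has at least two indistinguishable candidates there. This finishes the step at cost $3$ and yields your recursion.

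Your remark that $n=2$ admits no solution is correct; the paper's ``for every $n\ge1$'' and its claim that the base cases $n\le8$ are ``easy to check'' both overlook this single exception. Apart from $n=2$, the singleton family $\{\{1\},\dots,\{n-1\}\}$ handles the small base cases within the stated bound.
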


\begin{proof}[Proof of Theorem \ref{model3'}]

We construct $\cF_n$ recursively. If $n\le 8$, then it is easy to check that there is $\cF_n$ that solves Model 3' and $|\cF_n| \le 3\lceil\log_3 n \rceil -t(n)$.

Let us assume $n \ge 9$ and consider a family $\cF$ that solves Model 3' on $\lfloor n/3\rfloor$ elements. Let us replace each element $x$ by a set $A_x$ of three or four new elements to get $n$ elements altogether. For every set $F\in\cF$ let $A_F=\cup_{x\in F} A_x$. Let us also consider three disjoint sets $B_1,B_2,B_3$ such that $|A_x\cap B_i|=1$ for every $x\in [\lfloor n/3\rfloor]$ and $i=1,2,3$. Let $\cA=\{A_F:F\in \cF\}\cup \{B_1,B_2,B_3\}$ and $\cA_0=\{A_x:x\in [\lfloor n/3\rfloor]\}\cup \{B_1,B_2\}$. 

\begin{clm}\label{clmmodel3'}
$\cA$ solves Model 3' if $3\nmid n$ and $\cA_0$ solves Model 3' if $3\mid n$. 
\end{clm}

\begin{proof}[Proof of Claim \ref{clmmodel3'}]
First we prove that both $\cA$ and $\cA_0$ satisfy the property of Model 3'. Indeed, let $y\in[n]$. Let us first forget about the queries $B_1,B_2$ (and $B_3$) and consider the remaining queries. By the construction of the remaining queries if (from that information): 
 $y$ can find out which one of the sets $A_x$ contains the defective, then $A_x$ contains the defective and $y \in A_x$. However in this case $y$ (if $y$ is not the defective) cannot distinguish the other elements of $A_x$, even using the answer for the $B_i$ that contains it.

On the other hand if $y$ can not decide (again, without the $B_i$'s) which $A_x$ contains the defective, then there are at least two sets $A_x,A_z$ such that he cannot tell which one contains the defective element. Then without the sets $B_i$ he cannot distinguish them at all, thus all the (at least) $6$ elements of $A_x$ and $A_z$ should be considered as possible defective by $y$. However there is at most one $B_i$ that $y$ can use, and it intersects these (at least) two sets in (at least) two elements. Thus $y$ cannot distinguish these (at least) two elements from each other, nor the other at least $4$ elements from each other.

\vspace{2mm}

Finally we prove that both $\cA$ and $\cA_0$ are separating: if two elements are in different $A_x$, they are separated by the queries $A_F$. If they are in the same $A_x$, they are separated by $B_1,B_2,B_3$, or if $|A_x|=3$, then by $B_1,B_2$. We are done with the proof of Claim \ref{clmmodel3'} as if $n$ is divisible by three, then every $A_x$ has size $3$.

\end{proof}

By Claim \ref{clmmodel3'} we are done with the proof of Theorem \ref{model3'} as during this process we have a number divisible by three every time there is a 0 in the ternary form of $n$.

\end{proof}

\section{Model 4}

Now we start to investigate models where elements can share information among them. When we say that a group of elements together knows the defective element, we mean that all of them in the group know the answers for the queries that contained at least one of them, and using this information they can find the defective one. (Recall that information can not be distributed between different groups.) Let $i$ and $j$ be integers with $1 \le i < j \le n$.

\vspace{2mm}

\textbf{Property:} any $j$ elements together know the defective, but $i$ elements together do not know, unless one of them is the defective itself. 

\vspace{2mm}

Note that $i=0$ is another possibility. In that case the solution would be a family where any $j$ elements together can find the defective. However, in this section we only deal with the existence of a solution, and a solution for Model 2 is obviously a solution for this model as well. 

Let us continue with two simple observations. As long as we only consider the existence of a solution, we can assume the solution $\cF$ is intersection-closed, as if $F,G\in\cF$, then elements of $F\cap G$ know the answer to $F\cap G$ anyway. Another observation is that the family of singletons solves this model if $j\ge n-1$. Indeed, a set $A$ of elements has no information about the other elements, hence they know the defective if and only if he is one of them, or the only element not in the set. This implies $A$ has to have size at least $n-1$. We show that if $i\ge 2$, then this is the only case when Model 4 can be solved.

\begin{thm}\label{model4i2} If $i\ge 2$ and $j\le n-2$, then there is no solution for Model 4.

\end{thm}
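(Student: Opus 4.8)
The plan is to argue by contradiction: assume some $\cF$ solves Model 4 for $i \ge 2$ and $j \le n-2$, and (as noted in the text) assume $\cF$ is intersection-closed. First I would record the dictionary between group knowledge and the family: a group $X \subseteq [n]$ identifies the defective (for every possible defective) exactly when $\cF_X := \{F \in \cF : F \cap X \ne \emptyset\}$ is separating. Writing $U(p,q) := \bigcup \{F \in \cF : |F \cap \{p,q\}| = 1\}$ for the set of elements lying in some query that separates $p$ from $q$, the two defining properties translate as follows. The $j$-property says $\cF_Y$ is separating for every $j$-set $Y$, which is equivalent to: for all $p \ne q$ the set $U(p,q)$ meets every $j$-set, i.e. $|U(p,q)| \ge n-j+1 \ge 3$. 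The $i$-property, after the reduction explained next, says: for every $y$ and every pair $\{a,b\} \subseteq [n]\setminus\{y\}$ there is a \emph{twin} $w \ne y$ with $U(y,w) \cap \{a,b\} = \emptyset$ (neither $a$ nor $b$ lies in any query separating $y$ from $w$, so the group $\{a,b\}$ cannot tell $y$ from $w$).

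Second, I would reduce to $i=2$ by monotonicity of information: enlarging a group can only enlarge $\cF_X$, so if some pair $\{a,b\}$ identified a defective $y \notin \{a,b\}$, then any $i$-set $X \supseteq \{a,b\}$ with $y \notin X$ (which exists since $i \le n-3$) would identify $y$ as well, violating the $i$-property. Hence it suffices to derive a contradiction from the $j$-property together with the pairwise version of the $i$-property stated above; this is the weakest instance and covers all $i \ge 2$ at once.

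Third comes the quantitative heart. Fixing $y$, the pairwise $i$-property says the complements $\{[n]\setminus U(y,w) : w \ne y\}$ cover every $2$-subset of $[n]\setminus\{y\}$, while the $j$-property forces each such complement to have size at most $j-1$. Since there are only $n-1$ choices of $w$, a covering-design (Schönheim-type) count gives the necessary condition $(j-1)(j-2) \ge n-2$; for small $j$ this already contradicts $j \le n-2$, and in the boundary case it reproduces a direct pigeonhole (each $U(y,w)$ must then realize a distinct minimal set, but only $n-1$ values of $w$ are available). The difficulty is that this counting alone is not contradictory once $j$ is close to $n$, so I must exploit that for each $a$ the relation $p \equiv_a q$, meaning no query containing $a$ separates $p$ from $q$, is a genuine equivalence relation arising from one set family.

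Finally, the realizability step, which I expect to be the main obstacle, uses the minimal queries $m_x := \bigcap \cF_x$ (well defined in $\cF$ since $\cF$ is intersection-closed, satisfying $x \in F \iff m_x \subseteq F$ for $F \in \cF$; the elements are separated iff the $m_x$ are distinct). The key localization is: if $a \in m_y$ then $m_y$ is itself a query containing $a$, so any twin $w$ for a pair $\{a,b\} \subseteq m_y\setminus\{y\}$ must agree with $y$ on $m_y$ and hence lie in $m_y$; thus $\cF_y \subseteq \cF_w$ and every separator of $y,w$ contains $w$ but not $y$. This confines the whole twin structure to $m_y$, and combined with the triangle inclusion $U(y,w') \subseteq U(y,w) \cup U(w,w')$ (transitivity of each $\equiv_a$) it should force some single element to be confused with $y$ by more than $j-1$ elements, contradicting $|U(y,w)| \ge n-j+1$. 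Turning this localization plus transitivity into a clean over-count of confusers is where the real work lies.
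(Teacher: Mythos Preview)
Your proposal is not a complete proof: you yourself flag that ``turning this localization plus transitivity into a clean over-count of confusers is where the real work lies,'' and indeed your Sch\"onheim count $(j-1)(j-2)\ge n-2$ is satisfied whenever $j$ is close to $n-2$, so steps 1--4 do not close. There is also a concrete problem with step 5: you set $m_x=\bigcap\cF_x$ over the \emph{whole} family $\cF$, but nothing prevents $\cF$ from containing all singletons (this is compatible with both the $i$- and $j$-properties), in which case $m_x=\{x\}$ for every $x$ and your localization step yields no information. So the argument as outlined cannot be finished without an additional idea.

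The paper's proof supplies exactly that missing idea, and the whole argument then becomes very short. Let $\cF'$ be $\cF$ with the singletons removed. The key observation is that $\cF'$ is \emph{still} intersection-closed: if $F,G\in\cF'$ had $F\cap G=\{x\}$, pick $y\in F\setminus\{x\}$ and $z\in G\setminus\{x\}$; when $x$ is the defective, $y$ sees YES on $F$, $z$ sees YES on $G$, so $\{y,z\}$ together know the defective lies in $F\cap G=\{x\}$, contradicting the $i$-property for $i\ge 2$. Hence any two members of $\cF'$ meet in $\ge 2$ elements, and so $F_x:=\bigcap\{F\in\cF':x\in F\}\in\cF'$ has size $\ge 2$ for every $x$. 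Taking $F_y$ inclusion-minimal among these and any $z\in F_y\setminus\{y\}$, one gets $F_z\subseteq F_y$, hence $F_z=F_y$, so $\cF'$ does not separate $y$ and $z$. That means $y,z$ are separated only by singletons, and therefore the $n-2$ elements of $[n]\setminus\{y,z\}$ (which contains a $j$-set since $j\le n-2$) cannot distinguish $y$ from $z$ --- the desired contradiction. The step you were missing is removing the singletons \emph{and} using $i\ge 2$ directly to show nothing is lost; after that, the minimality argument you were reaching for in step 5 finishes in two lines without any counting.
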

\vspace{2mm}

\noindent
The only remaining case is $i=1$. Surprisingly, the solution here depends on divisibility conditions. First we deal with the $j=2$ case. In the following two theorems we prove that a kind of minimal structure should be contained in any solution in this case. 

\begin{thm}\label{Model4i1j21}

If $n \ge 4, \ i=1$ and $j=2$, a Steiner triple system minus a partial matching solves Model 4.

\end{thm}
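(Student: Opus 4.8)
The plan is to read off, from the Model~4 property in the case $i=1,\ j=2$, the three things that must be checked for $\cF=S\setminus M$ (where $S$ is a Steiner triple system on $[n]$ and $M\subseteq S$ is a partial matching): \textbf{(A)} for every pair $\{x,y\}$ the family $\cF_x\cup\cF_y$ is separating, so any two elements together identify the defective; \textbf{(B)} for every $x$ and every $y\neq x$ there is some $w\neq y$ lying in exactly the same members of $\cF_x$ as $y$, so a single non-defective element can never pin down the defective; and \textbf{(C)} when $x$ is itself the defective it does identify itself (the exceptional clause). Throughout I will use the two structural facts about $S$: every pair of points lies in a unique block, and the blocks of $S$ through a fixed point $x$ partition the remaining $n-1$ points into pairs, so $x$ lies in exactly $(n-1)/2$ blocks of $S$, and any $w\neq x$ shares at most one block with $x$.

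For \textbf{(B)} and \textbf{(C)} the key is a single auxiliary point. Given $x$ and $y\neq x$, let $C_{xy}=\{x,y,c\}$ be the unique block of $S$ through $x$ and $y$. Since $y$ shares a block with $x$ only in $C_{xy}$, and likewise $c$ shares a block with $x$ only in $C_{xy}$, the members of $\cF_x$ containing $y$ and the members of $\cF_x$ containing $c$ are both equal to $\{C_{xy}\}$ if $C_{xy}\in\cF$ and are both empty if $C_{xy}\in M$. In either case $y$ and $c$ lie in exactly the same members of $\cF_x$, so $x$ cannot tell $y$ from $c$, which is precisely \textbf{(B)}. For \textbf{(C)}, when $x$ is the defective it sees only YES answers, and a point $w\neq x$ is consistent with this exactly when $w$ lies in every member of $\cF_x$; but $w$ shares at most one block with $x$, while $M$ (being a matching) deletes at most one of the $(n-1)/2$ blocks of $S$ through $x$, leaving $x$ in at least two blocks. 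Hence no $w\neq x$ is consistent, and $x$ identifies itself.

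For \textbf{(A)}, fix $x\neq y$ and two points $u\neq v$ to be separated, and let $C'=\{u,v,c'\}$ be the block of $S$ through $u,v$; I must find a block through $x$ or through $y$ containing exactly one of $u,v$. The decisive observation is that the correct separating blocks form a pair sharing a point, so the disjointness built into a matching saves one of them. Concretely, if some $z\in\{x,y\}$ avoids $\{u,v,c'\}$, then the blocks $B_{zu}$ (through $z,u$) and $B_{zv}$ (through $z,v$) are distinct, each contains exactly one of $u,v$ (neither equals $C'$ since $z\notin C'$), and they share the point $z$, so they cannot both lie in $M$; the survivor lies in $\cF_z\subseteq\cF_x\cup\cF_y$ and separates $u,v$. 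The only other possibility is $\{x,y\}\subseteq\{u,v,c'\}$, which forces at least one of $u,v$ into $\{x,y\}$; using that point, say $u$, every block of $S$ through $u$ other than $C'$ contains $u$ but not $v$, and among the $(n-1)/2$ blocks through $u$ at most one is $C'$ and at most one is deleted by $M$, so a separating survivor remains.

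The heart of the argument, and the step I expect to be the main obstacle, is \textbf{(A)}: one must recognize that $u$ and $v$ should be separated by two blocks through a common vertex, so that the defining disjointness of a partial matching guarantees at least one is never removed; this is also exactly why the deleted blocks must form a matching and not an arbitrary subfamily. What remains is routine bookkeeping, namely confirming the counting inequalities that a fixed point retains at least two surviving blocks; these hold for every $n$ admitting a Steiner triple system with $n\ge 4$ (the smallest being $n=7$), and the few degenerate configurations can be checked by hand.
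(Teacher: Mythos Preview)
Your proof is correct, but it takes a different route from the paper's. The paper proceeds in two stages: it first verifies that a full Steiner triple system $S$ solves Model~4 (via a short case analysis on whether the second element $a'$ lies in the unique YES-block through $a$), and then argues separately that deleting a partial matching loses no information, because each element $a$ can \emph{reconstruct} the answer to the unique deleted block through it from the number of YES answers it sees among the remaining blocks.

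You instead work directly with $\cF=S\setminus M$ and verify the three requirements in one pass. The substantive difference is in your part \textbf{(A)}: rather than reconstructing a missing answer, you observe that the two natural separating candidates $B_{zu},B_{zv}$ share the vertex $z$, so the disjointness of $M$ forbids deleting both. This is a clean structural use of the matching hypothesis and makes transparent \emph{why} a matching (and not an arbitrary subfamily) can be removed. The paper's reduction, on the other hand, is a tidy information-theoretic shortcut: once you know the STS case, nothing further needs to be re-checked. Both arguments are short; yours is more self-contained, theirs more modular.
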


\begin{thm}\label{steine} 

Let $i=1$ and $j=2$. If $\cF$ is intersection-closed and solves Model 4, then it contains a Steiner triple system on $n$ elements minus a partial matching.

\end{thm}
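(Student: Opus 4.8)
The plan is to reformulate ``solves Model 4'' for $i=1,j=2$ purely in terms of \emph{trace partitions}, extract the triple structure by a double-counting argument, and finally realize the triples inside $\cF$ using intersection-closedness. For $x\in[n]$ let $P_x$ be the partition of $[n]$ according to the trace $\tp_x(e):=\{F\in\cF_x:e\in F\}$, write $e\sim_x e'$ when $\tp_x(e)=\tp_x(e')$, and set $C(x):=\bigcap\cF_x\in\cF$ (this is in $\cF$ by intersection-closedness, and $\cF_x\neq\emptyset$ since $\cF$ is separating). The requirement that any two elements $\{x,y\}$ together find the defective says exactly that $\cF_x\cup\cF_y$ is separating, i.e.\ the common refinement $P_x\wedge P_y$ is the partition into singletons; call this (A). The requirement that a single element never identifies a defective other than itself says that for each $x$ every block of $P_x$ except the block $C(x)\ni x$ has size at least $2$; call this (B).

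Next I would run a counting argument. Condition (A) implies that any two elements $u\neq v$ lie in a common block for \emph{at most one} index $x$ (otherwise $P_x\wedge P_y$ would fail to separate them), so with $S_{uv}:=\{x:u\sim_x v\}$ we have $|S_{uv}|\le 1$. Double counting incidences gives
$$\sum_{x\in[n]}\ \sum_{B\in P_x}\binom{|B|}{2}\ =\ \sum_{\{u,v\}}|S_{uv}|\ \le\ \binom{n}{2}.$$
On the other hand (B) forces, for each $x$, all blocks but $C(x)$ to have size $\ge 2$, which bounds the inner sum below by roughly $(n-1)/2$; summed over the $n$ indices this lower bound already reaches $\binom{n}{2}$. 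Hence every inequality is tight, which I expect to yield $C(x)=\{x\}$, every non-trivial block of $P_x$ of size exactly $2$ (so $P_x$ consists of $\{x\}$ together with a near-perfect matching $M_x$ on $[n]\setminus\{x\}$), and $|S_{uv}|=1$ for every pair. The parity obstruction for even $n$, and more generally the elements carrying the \emph{empty} trace, are exactly what the ``minus a partial matching'' slack absorbs: the empty-trace elements of $x$ form a single block (of size $2$ after tightness), at most one per index, and these will be the \emph{removed} triples.

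Finally I would realize the combinatorial triples inside $\cF$. For a pair $\{u,v\}$ covered by $x$ with nonempty trace, intersection-closedness makes $G(u,v):=\bigcap\{F\in\cF:u,v\in F\}$ a member of $\cF$ containing $x,u,v$; the goal is to prove $G(u,v)=\{x,u,v\}$ and that the assignment $\{u,v\}\mapsto x$ is symmetric in the three points, so that the sets $\{x,u,v\}$ form a genuine Steiner triple system whose removed triples are precisely the empty-trace blocks. I expect this last step to be the main obstacle: the partition data produced above is only a \emph{near-one-factorization} of $K_n$, which is strictly weaker than a Steiner triple system, so the threefold symmetry cannot come from the abstract partitions alone and must be squeezed out of the set structure. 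The mechanism I would pursue is that symmetry is equivalent to the coverer $x$ lying in the minimal query through $\{u,v\}$: once $x\in G(u,v)$ one gets $\tp_u(x)=\tp_u(v)$, hence $\{x,v\}\in M_u$, and likewise $\{x,u\}\in M_v$. Ruling out a query through $\{u,v\}$ that omits $x$, and simultaneously ruling out a fourth point in $G(u,v)$, is where intersection-closedness together with the exactly-once covering from the counting step must be used carefully; this done, $\cF$ contains the Steiner triple system minus the identified partial matching.
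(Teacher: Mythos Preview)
Your reformulation via trace partitions and the double-counting argument are correct and constitute a genuinely different route from the paper. The paper instead fixes a putative defective $d$ and observes that the candidate sets $S'_a:=S_a\setminus\{d\}$ (for $a\neq d$) are nonempty, pairwise disjoint subsets of an $(n-1)$-element set, hence singletons; letting $d$ vary then yields the same block structure you obtain. Your global counting has the virtue of delivering $C(x)=\{x\}$, the size-$2$ blocks, and $|S_{uv}|=1$ in one stroke, whereas the paper's pigeonhole is local (per defective) and assembles these facts piecemeal.

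The gap is exactly where you locate it: the threefold symmetry of the coverer relation and the realization $G(u,v)=\{x,u,v\}$. You are right that a near-one-factorization of $K_n$ is strictly weaker than a Steiner triple system, so the symmetry cannot come from the partitions alone. Your plan---prove $x\in G(u,v)$, deduce $\tp_u(x)=\tp_u(v)$, and rule out a fourth point---is the right target, but it is not carried out, and the two obstructions you list (a query through $\{u,v\}$ omitting $x$, and extra points in $G(u,v)$) are not dispatched. The paper's mechanism, somewhat tersely presented, rests on two ingredients you could import: first, $\cF$ has no two-element members (if $\{p,q\}\in\cF$ then $c(p,q)\in\{p,q\}$, impossible), so any three-element $\{p,q,r\}\in\cF$ automatically satisfies $p\sim_r q$, since intersecting it with a set in $\cF_r$ containing exactly one of $p,q$ would produce a forbidden pair; second, for the empty-trace case one argues directly that if $b\sim_a c$ with $\tp_a(b)=\emptyset$ but $a':=c(b,c)\neq a$, then via the triple through $b,c$ one shows $a'$ also fails to separate $b$ from $c$, so $\{a,a'\}$ together violates $j=2$. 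As written, your proposal identifies the obstacle and gestures at the mechanism but stops short of executing it.
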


\noindent
Note that if $i=1$ and $j=2$, then there is a solution for $n=1$ and $n=3$ and there is no solution for $n=2$. So by the previous two theorems and Theorem \ref{steiner} we have:

\begin{cor} Let $i=1$ and $j=2$. There is a solution for Model 4 if and only if $n=6k+1$ or $n=6k+3$.

\end{cor}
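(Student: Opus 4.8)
The plan is to assemble the corollary from the three results stated just above it, splitting into the trivial small cases and the generic range $n \ge 4$. For $n \le 3$ I would simply invoke the note preceding the corollary: there is a solution for $n=1$ and $n=3$ and none for $n=2$. Since $1 = 6\cdot 0 + 1$ and $3 = 6\cdot 0 + 3$ are of the required form while $2$ is not, these cases already agree with the claimed equivalence, so it remains to treat $n \ge 4$.

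For the \emph{if} direction, suppose $n = 6k+1$ or $n = 6k+3$. By Theorem \ref{steiner} there is a Steiner triple system on $[n]$. By Theorem \ref{Model4i1j21} a Steiner triple system minus a partial matching solves Model 4; taking the partial matching to be empty, the Steiner triple system itself is already a solution, which proves existence.

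For the \emph{only if} direction, suppose some family solves Model 4. By the observation at the start of the section we may assume the solution $\cF$ is intersection-closed. Theorem \ref{steine} then guarantees that $\cF$ contains a Steiner triple system on $[n]$ minus a partial matching, i.e.\ there is a Steiner triple system $\cT$ on $[n]$ and a partial matching $M \subseteq \cT$ with $\cT \setminus M \subseteq \cF$. Adding the pairwise disjoint triples of $M$ back yields the genuine Steiner triple system $\cT$ on $[n]$, so a Steiner triple system on $[n]$ exists. By Theorem \ref{steiner} this forces $n = 6k+1$ or $n = 6k+3$.

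The deduction is essentially bookkeeping once the two structural theorems are in hand; the one point that deserves a word of care is the equivalence between the existence of a Steiner triple system \emph{minus a partial matching} on $[n]$ and the existence of an honest Steiner triple system on $[n]$. This equivalence is immediate from the definition --- a partial matching is by definition a family of pairwise disjoint triples, which can be deleted from and added back to the system without obstruction --- so no genuine difficulty remains at the level of the corollary. I expect the real work to sit entirely in Theorems \ref{Model4i1j21} and \ref{steine}, which the corollary merely packages together with the classical existence theorem for Steiner triple systems.
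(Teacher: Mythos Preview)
Your proposal is correct and is exactly the argument the paper intends: it simply packages Theorems \ref{Model4i1j21}, \ref{steine}, and \ref{steiner} together with the small-case remark preceding the corollary. The only points you make explicit beyond the paper's one-line justification are the reduction to an intersection-closed solution and the observation that removing or re-inserting a partial matching does not affect the existence of a Steiner triple system, both of which are unproblematic.
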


\noindent
Now we continue with larger $j$'s.

\begin{thm}\label{model4j3} Let $i=1$. Then we have:

a) if $j\ge 4$ and $n\neq 6$, then there is a solution for Model 4.

b) if $j=3$, $n\neq 6$, $n\neq 6k+2$ and $n\neq 6k+5$ for some integer $k$, then there is a 

\hspace{4mm}solution for Model 4.

\end{thm}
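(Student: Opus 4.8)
The plan is to read off two purely combinatorial requirements from the property of Model~4 (with $i=1$) and then realise them by taking the blocks of a suitable pairwise balanced design as the query family. Writing $\cF_x=\{F\in\cF:x\in F\}$ and, for $X\subseteq[n]$, $\cF_X=\{F\in\cF:F\cap X\neq\emptyset\}$, the property unwinds to two conditions. First, condition (T): for every $x$ and every $y\neq x$ there is a $z\neq y$ lying in exactly the same members of $\cF_x$ as $y$ (a \emph{twin} of $y$ as seen from $x$), so that a lone element never pins down the defective unless it is the defective itself. Second, condition (J): for every $j$-set $X$ the family $\cF_X$ is separating on $[n]$, so that any $j$ elements pooling their information can locate the defective. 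Both are monotone in the right direction, so it suffices to exhibit a single $\cF$ satisfying (T) and (J); I will not need the intersection-closed normalisation here.

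The construction I would use is: take a \textbf{PBD} on $[n]$ all of whose blocks have size between $3$ and $j+1$, let $\cF$ be its set of blocks, and assume $n>j+1$ (so each point lies in at least two blocks). Property (T) is then immediate from the defining property of a PBD: given $x$ and $y\neq x$, the unique block $B$ through $\{x,y\}$ has size $\ge 3$, and any $z\in B\setminus\{x,y\}$ lies, exactly like $y$, in the single block $B$ among the members of $\cF_x$, hence is a twin of $y$. For (J), fix $|X|=j$ and a pair $p\neq q$ with unique common block $B$. If some $x\in X$ lies outside $B$, then the unique block on $\{p,x\}$ is $\neq B$, so it contains $p$ but not $q$ and meets $X$; if instead $X\subseteq B$, then $|B|\le j+1$ forces $p$ or $q$, say $p$, into $X$, and any second block through $p$ separates $p$ from $q$ while meeting $X$. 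Either way $\cF_X$ separates $p$ from $q$, so $\cF_X$ is separating. Thus any PBD with block sizes in $[3,j+1]$ solves Model~4 for $i=1$.

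It then remains to guarantee such a design for the stated orders. For part (b) the admissible block sizes are exactly $\{3,4\}$, and a PBD$(\{3,4\})$ of order $n$ exists precisely when $n\equiv 0,1\pmod 3$ and $n\neq 6$; since $n\equiv 0,1\pmod 3$ is the same as $n\not\equiv 2,5\pmod 6$, this reproduces (b) exactly. (Here the $\{3\}$-blocks alone, i.e.\ a Steiner triple system, already cover the orders $n\equiv 1,3\pmod 6$ by Theorem~\ref{steiner}, and it is the size-$4$ blocks that buy the extra residues $n\equiv 0,4\pmod 6$.) For part (a), where block sizes up to $j+1\ge 5$ are allowed, a PBD with block sizes in $\{3,4,5\}$ exists for every $n\neq 6$ outside a small set of exceptional small orders; the regime $n\le j+1$ (equivalently $j\ge n-1$) is covered by the singleton family, which disposes of the tiny orders, and the remaining sporadic exception (essentially $n=8$) is handled by exhibiting an explicit family directly.

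The hard part is the design-existence input rather than the verification. Concretely, the obstacles I expect are: (i) citing or establishing the exact spectra of PBD$(\{3,4\})$ and of PBD$(\{3,4,5\})$, including the isolated non-existence at $n=6$ (and $n=8$), so that the modular hypotheses in (a) and (b) come out on the nose; and (ii) dispatching the finitely many small orders not covered by a clean design, either by direct construction or, when $j\ge n-1$, by the singleton solution. The bound $|B|\le j+1$ forced by (J) also explains the phase transition in $j$: for $j=2$ it pins block sizes to exactly $3$ (a Steiner triple system, hence the narrow residues of the $j=2$ corollary), whereas every $j\ge 3$ admits blocks of size $4$ and larger, and this slack is precisely what lets the construction exist for the wider range of $n$.
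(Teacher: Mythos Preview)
Your proposal is correct and follows essentially the same approach as the paper: verify that a PBD with small block sizes solves Model~4 for $i=1$, then invoke the known PBD spectra (Gronau--Mullin--Pietsch) to get the stated residue classes, and patch the sporadic exception $n=8$ by hand. Your verification of conditions (T) and (J) is in fact more detailed than the paper's, which only sketches the argument by analogy with the Steiner-triple-system case; your explicit identification of the block-size ceiling $j+1$ (and the matching floor $3$) is a nice clarification of why $\{3,4\}$ and $\{3,4,5\}$ are the right block-size sets for $j=3$ and $j\ge 4$ respectively.
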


\noindent
The only remaining cases are $i=1$, $j=3$, $n = 6k+2$ or $6k+5$. In every other case we completely characterized the values of $n$ where a solution for Model 4 exists. For our knowledge in the remaining cases see the Remark section.

\subsection{Proofs about Model 4}

Let us start with an easy observation. If $\cF$ is a solution for some $i$ and $j$, then it is a solution for $i'$ and $j'$ with $i'\le i$ and $j'\ge j$.

We will give several constructions that share some basic properties. All the families are linear, meaning that any two query sets intersect in at most one element. There are no two-element query sets. Then an element $x$ can find the defective element only if there are exactly $n-1$ elements contained in sets in $\cF_x$. On the other hand, usually a straightforward case analysis shows that any two (or three, or four) elements together find the defective element, thus in some cases we omit the details.

\begin{proof}[Proof of Theorem \ref{model4i2}] Let us assume indirectly that $\cF$ is a solution. As we remarked earlier we can assume $\cF$ is intersection-closed. Let us remove the singletons from $\cF$ and let $\cF'$ be the resulting family. We claim that $\cF'$ is also intersection-closed. Indeed, if $F,G\in\cF'$ and $| F \cap G|\ge 2$, then their intersection is in $\cF$. On the other side if the intersection would be $\{x\}$, then let $y\in F\setminus \{x\}$, $z\in G\setminus \{x\}$. If $x$ is the defective, $y$ and $z$ together finds that out, which is impossible since $i \ge 2$. Thus $|F\cap G|>1$, hence it is in $\cF'$.

For an element $x \in [n]$ let $F_x := \cap_{x \in F \in \cF'} F$ be the intersection of the sets in $\cF'$ that contain $x$. We have $F_x\in \cF'$. Let $F_y$ be inclusion-wise minimal in $\{F_x : x \in [n]\}$. It has size larger than 1, thus it contains an element $z\neq y$, and we have $F_z\subseteq F_y$ by the definition of $F_z$. Thus we have $F_y=F_z$, which means that $\cF'$ does not separate $y$ and $z$, meaning that they are only separated by singletons (of $\cF$). But then all the other elements (= $[n] \setminus \{y,z\}$) together cannot find which one of $y$ or $z$ is the defective, which is a contradiction as $n \ge 3$ and $j \le n-2$.

\end{proof}

\begin{proof}[Proof of Theorem \ref{Model4i1j21}]

First we show that a Steiner triple system is a solution. Indeed, for any element $a$, if $d$ is the defective with $a\neq d$, then $a$ gets YES answer to the only query $F$ containing both $a$ and $d$. It contains a third element $b$, and $a$ does not know if $b$ or $d$ is the defective as - using that the query family is a Steiner system - $a$ has no more information about $b$. 

\vspace{1mm}
\noindent
On the other hand, let $a' \in [n] \setminus \{a,d\}$. There are two cases. 

\vspace{2mm}

Case 1 : if $a'=b$. 

By $n>3$ there is another query containing $a'$, the answer to that is NO, thus $a'$ knows $a'$ is not defective, similarly $a$ knows about himself that he is not defective, but they both know the defective is in $F$ and so they together can find out it is $d$. 

\vspace{2mm}

Case 2: if $a' \neq b$. 

Then there is a query $F'$ containing both $a'$ and $d$, thus $a$ and $a'$ together know the defective is in $F\cap F'=\{d\}$.

\vspace{2mm}

Let us finish the proof by showing that leaving out a partial matching does not change the information available to the elements. Theorem \ref{steiner} implies $n=6k+1$ or $n=6k+3$ and we have assumed $n\ge 4$, thus we have $n\ge 7$, which means there are at least three queries containing a given element. It is easy to see that if $\{a,b,c\}$ is missing, $a$ knows what the answer to that would be: if $a$ gets exactly one YES answer to the other queries, then it is NO, otherwise it is YES. Indeed, $a$ gets zero YES answer if $b$ or $c$ is the defective, only YES answers (thus at least two of those) if $a$ is the defective, and one YES answer otherwise (for the query that contains $a$ and the defective $d$). 

\end{proof}

Now we prove that a Steiner triple system minus a partial matching is a minimal query family in this case, supposing that the query family is intersection-closed.

\begin{proof}[Proof of Theorem \ref{steine}]
For $a \in [n]$ let $S_a$ be the set of elements that can be defective according to $a$ after getting the answers, and let $S'_a:=S_a\setminus \{d\}$, where $d$ is the defective. Note that $a$ knows $S_a$, but does not know $S'_a$. The property $i=1$ implies $|S_a|\ge 2$ and the property $j=2$ implies $S_a\cap S_b=\{d\}$ if $a\neq b$, $a,b\neq d$. Thus the sets $S'_a$ ($a \in [n], \ a\neq d$) are pairwise disjoint, non-empty sets on an underlying set of size $n-1$. Hence they are singletons as there are $n-1$ of them. This means that for any $a$ there is exactly one element that he cannot distinguish from the defective.

Let us now consider $\cF$. For any $a$, if $d \in [n]\setminus \{a\}$ is considered as the defective, then there is an element $c(a,d) \in [n] \setminus \{a,d\}$ such that $a$ can not distinguish between $d$ and $c(a,d)$. By the remarks above we know that there is exactly one such $c(a,d)$. If there are members of $\cF_a$ that contain both $d$ and $c(a.d)$, then using again the remarks in the previous paragraph, we have that the intersection of them is $\{a,d,c(a,d)\}$, thus it is in $\cF$, as $\cF$ is intersection closed. If there is no such member of $\cF_a$, let us add $\{a,d,c(a,d)\}$ to $\cF$. Let
$$\cF':=\cF \cup\{\{a,d,c(a,d)\} : a \in [n], \ d \in [n] \setminus \{a\}, \ \{a,d,c(a,d)\} \not \in \cF\}.$$

First note that it is impossible that $\{a,d_1,c(a,d_1)\}, \{a,d_2,c(a,d_2)\} \not \in \cF$ with 4 different elements $d_1,d_2,c(a,d_1),c(a,d_2)$ as otherwise $a$ could not distinguish between these elements, which would be a contradiction by the first paragraph of this proof.

Note also that if we add $\{a,b,c\}$ this way because $a$ cannot distinguish $b$ and $c$, then also $b$ cannot distinguish $a$ and $c$ and $c$ cannot distinguish $a$ and $b$. Indeed, let us assume $b$ can distinguish $a$ and $c$, i.e. there is a set $F \in \cF$ that contains $b$ and $c$, but does not contain $a$. There is an element $a'$ such that $b$ cannot distinguish $c$ and $a'$, and thus $\{b,c,a'\}\subseteq F$. Moreover, $\{b,c,a'\}\in\cF$ as it is the intersection of the sets in $\cF$ containing both $b$ and $c$. But this means $a'$ cannot distinguish $b$ and $c$, similarly to $a$, thus they together cannot either, a contradiction. This thought also shows that two sets from $\cF' \setminus \cF$ can not intersect in two elements.
Altogether with the previuos paragraph we have that $\cF' \setminus \cF$ form a partial matching.

Let $\cF_3:=\{F\in\cF': |F|=3$\}. We claim that $\cF_3$ is a Steiner triple system. For any two elements $a,b$ there is a set in $\cF_3$ that contains both as there is an element $c$ such that $a$ cannot distinguish $b$ and $c$; by the above either $\{a,b,c\}\in\cF$ because $\cF$ is closed under intersection, or $\{a,b,c\}$ was added to $\cF$. Moreover, there is exactly one such element $c$, thus exactly one such set.
\end{proof}

\begin{proof}[Proof of Theorem \ref{model4j3}] 

First we note that a PBD-($\{3,4\}$) solves Model 4 with $i=1$, $j=3$ and a PBD-($\{3,4,5\}$) solves Model 4 with $i=1$ and $j=4$. The proof of this statement goes similarly to the proof of Theorem \ref{Model4i1j21}, thus we provide only a sketch here. For any two elements there is a query containing them, and the other elements of that query cannot distinguish the first two. However, any other element can.

The sets of integers $n$ such that there exists such pairwise balanced designs on $n$ elements have been determined by Gronau, Mullin and Pietsch \cite{GMP1995}. They showed that if $n=3k$ or $n=3k+1$ with $n\neq 1,6$, then there exists a PBD-($\{3,4\}$). This proves b). They also showed that if $n\neq 1,2,6,8$, then there exists a PBD-($\{3,4,5\}$). This proves a) except for the case $n=8$. In that case consider the sets $\{1,2,3,4\}, \{1,5,7\}, \{2,5,8\}, \{3,6,8\}, \{4,6,7\}$. One can easily check that these sets solve Model 4.

\end{proof}

\section{Adaptive scenario}

A natural idea is to consider the adaptive versions of these problems. However, the definition of these models are not straightforward. Earlier we assumed the existence of a Questioner only for notational convenience, the elements could come up with the query family in advance. However, in this case it is not clear which one of them should find out the next query in an adaptive algorithm, as they have different information available to them. Here we assume that there is a Questioner who knows all the answers and chooses the next query.

However, there are still two versions of this problem. In the first version the elements know the algorithm, and can use for example the order of the queries to gain information, while in the second version they only receive the family of queries containing them, together with the answers, at the end of the algorithm (thus it is adaptive only for the Questioner). 

Consider for example Model $4$. In the first version Questioner can ask all the singletons, finding the defective this way, and then ask additional queries only to give information to the elements. He wants to share the identity of the defective element as a secret with every $j$-set. He chooses any secret sharing scheme, and to an arbitrary element $x$ he gives its share of the secret by repeating the query $\{x\}$ an appropriate number of times.

On the other hand, we will see that in the second version there is no solution for Model 4 in some cases. In what follows, we only consider the second version.

Note that Questioner can still ask queries only to give information to the elements (just not in a tricky way). For example he can ask queries to find the defective, and then share this information with the elements using further queries. In particular this gives an algorithm of length $\lceil \log_2 n\rceil+2$ for Model 1 and Model 2. After a separating family is asked, Questioner asks the defective $[n]\setminus \{d\}$ and $\{d\}$, if needed.

It is easy to see that Model 3 still cannot be solved. Indeed, let us assume that every answer is YES (unless it would contradict earlier answers). If Questioner finds out that $x$ is the defective, then it is separated from every other element $y$ by a query. The answer to the first such query was YES, thus it contains $x$, and so $x$ knows $y$ is not defective for every $y\neq x$.

Model 3' can be solved using $\lceil \log_2 n\rceil+1$ queries. Questioner starts with the usual halving procedure: first asks a set $F$ of size $\lceil n/2 \rceil$, and then depending on the answer continues recursively with $F$ or $\overline{F}$ as the base set. Then stops when arrives to a set of size less than 6, and asks all but one of the singletons. 

So far there was no difference between the adaptive and non-adaptive versions of the models when considering the existence of a solution. However the situation radically changes with Model 4.

\begin{thm}\label{adaptive}

Let $i=1$. Model 4 can be solved adaptively if and only if $2\le j\le n$ and $n$ is odd, or $3 \le j\le n$ and $n$ is even.

\end{thm}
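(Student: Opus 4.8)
The plan is to prove both directions by first reducing to the two boundary values $j=2$ and $j=3$, then separating the positive (construction) part from the negative (impossibility) part, the latter being the crux. By the monotonicity observation at the start of this subsection (a solution for $i,j$ is a solution for $i'\le i$, $j'\ge j$), it suffices to construct solutions for $j=2$ when $n$ is odd and for $j=3$ when $n$ is even, and to prove impossibility only in the single remaining case $n$ even, $j=2$; every other $(n,j)$ with $2\le j\le n$ then follows, since $j\ge 2$ always because $i=1<j$.

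For the constructions I would use an adaptive two-phase strategy: first ask all $n$ singletons $\{1\},\dots,\{n\}$, which locates the defective $d$ for the Questioner while telling each non-defective element only that it is not defective, and then ``broadcast'' using $d$-dependent queries. When $n$ is odd, $V_d:=[n]\setminus\{d\}$ has even size, so I fix a fixed-point-free involution $\sigma$ of $V_d$ and, for each of the $(n-1)/2$ pairs $\{a,\sigma(a)\}$, ask the triangle $\{a,\sigma(a),d\}$. Then each non-defective $a$ sees exactly its singleton (NO) and its triangle (YES), so its candidate set is $\{d,\sigma(a)\}$; two distinct non-defective elements have candidate sets meeting only in $d$, which yields $j=2$, while a single element stays confused ($i=1$). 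When $n$ is even, $|V_d|$ is odd, so I would partition $V_d$ into pairs and one leftover triple $\{p,q,r\}$; handle the pairs as above and realise the triple by the single size-four query $\{p,q,r,d\}$ together with $\{p\},\{q\},\{r\}$, giving $C_p=\{q,r,d\}$ and so on. Now any single element has at least two candidates while any three elements intersect down to $\{d\}$, so this solves $j=3$; note it does not solve $j=2$, since $p,q$ together only reach $\{r,d\}$, exactly the parity phenomenon forced below.

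For the impossibility ($n$ even, $j=2$) I fix the adaptive decision tree and a defective $d$, and write $C_x(d)$ for the set of elements consistent with what $x$ observes on the path to $d$. Exactly as in the proof of Theorem \ref{steine}, $i=1$ gives $|C_x(d)|\ge 2$ and $j=2$ gives $C_x(d)\cap C_y(d)=\{d\}$ for distinct non-defective $x,y$, so the sets $C_x(d)\setminus\{d\}$ are $n-1$ pairwise disjoint nonempty subsets of the $(n-1)$-set $V_d$, forcing each to be a singleton and defining a bijection $c_d$ of $V_d$ with $C_x(d)=\{d,c_d(x)\}$. I would then rule out cycles of length at least $3$: every YES query $F$ on the path (one with $d\in F$) has $F\setminus\{d\}$ closed under $c_d$, while every NO query contains no $c_d$-consecutive pair, so for a $k$-cycle $a_1\to\cdots\to a_k$ with $k\ge 3$ every query through $a_i$ keeps $a_{i-1}$ and $d$ on the same side, whence $a_{i-1}\in C_{a_i}(d)$, contradicting $|C_{a_i}(d)|=2$. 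Thus $c_d$ is an involution, and $n-1$ equals the number of its fixed points plus twice the number of its transpositions.

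The main obstacle is the final parity step: it remains to show that the number of fixed points of $c_d$, namely the elements $a$ that cannot even rule themselves out (those meeting only YES queries on the path to $d$), is even, which makes $n-1$ even and contradicts $n$ even. This is exactly where the global structure of the adaptive tree must be used: a single $d$ admits realisations of $c_d$ with an odd number of fixed points (one can exhibit such a query family for a fixed $d$ on four points), so the evenness cannot be read off one branch and must come from the requirement that all $n$ root-to-leaf branches be realised by one tree with shared prefixes. I would attack this by analysing, for a fixed point $a$ of $c_d$, the branch point where the paths to $d$ and to $a$ diverge (which must be a YES query containing $d$ but not $a$) and using the companion bijections $c_a$ to pair up fixed points across branches; making this pairing canonical, and hence forcing an even count, is the delicate part of the argument.
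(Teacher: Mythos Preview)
Your constructions are exactly the paper's, and your elimination of cycles of length $\ge 3$ in the auxiliary functional digraph is correct and matches the paper's argument in different words.

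The gap is precisely where you locate it: the parity of the fixed-point count of $c_d$. You try to prove that for \emph{every} $d$ the involution $c_d$ has an even number of fixed points, and you correctly note that this is false branch-by-branch, so you propose a cross-branch pairing via the companion maps $c_a$. That program is vague and, more importantly, unnecessary. You are missing the standard adversary idea: you do not have to handle all $d$; it is enough to exhibit \emph{one} $d$ on which the algorithm fails.

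The paper fixes the answering strategy ``answer NO unless that contradicts earlier answers'', which singles out a particular leaf $d$ of the decision tree. For this specific $d$ there are \emph{no} fixed points at all. Indeed, a fixed point $y$ of $c_d$ is an element that cannot distinguish itself from $d$, i.e.\ $y$ appears only in YES queries along the path to $d$. Since $\cF$ is separating there is a first query $Q$ along this path that separates $d$ from $y$; as $y$ lies only in YES queries, $Q$ must contain $d$ and omit $y$, so the answer to $Q$ was YES. But at the moment $Q$ was asked $y$ was still consistent with all previous answers (nothing has yet separated $y$ from $d$), so the adversary would have answered NO --- a contradiction. Hence every cycle has length exactly $2$, $n-1$ is even, and $n$ is odd.

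So the fix is not a delicate global pairing across branches but the one-line choice of an adversary; with that, your proof is complete and coincides with the paper's.
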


\begin{proof} 

Let Questioner start with asking the singletons to find the defective element $d$. If $n$ is odd, he partitions the remaining elements into pairs and asks them together with $d$. Then every element $y\neq d$ knows that the defective is either its pair $y'$ or $d$. On the other hand $y$ and $z$ together know it is $d$, as $y'=z'$ cannot happen unless $y=z$. If $n$ is even, one of the parts should contain three of the remaining elements $a,b,c$. Then for example $a$ knows the defective is $d$, $b$ or $c$, and $a$ and $b$ together cannot find the defective, but any three elements can.

Let us now assume $j=2$ and $n$ is even. Let us assume every answer is NO, except if that would lead to a contradiction (note that it still makes sense for Questioner to ask such queries, to help the elements find the defective, as we just saw in the algorithm described above). We claim that in this case there is no solution.

We repeat the beginning of the proof of Theorem \ref{steine}. After the algorithm ends, let $S_a$ be the set of elements that can be defective according to $a$, and let $S'_a=S_a\setminus \{d\}$, where $d$ is the defective. We have $|S_a|\ge 2$ and $S_a\cap S_b=\{d\}$ if $a\neq b$, $a,b\neq d$. Thus the sets $S'_a$, $a\neq d$ are $n-1$ pairwise disjoint, non-empty sets on an underlying set of size $n-1$, thus they are singletons. This means that for any $a$ there is exactly one element that he cannot distinguish from the defective.

Now let us define an auxiliary directed graph on the $n-1$ non-defective elements. Let $y\rightarrow z$ if $y$ cannot distinguish $d$ and $z$, i.e. among the sets that contain $y$, exactly the same sets contain $d$ and $z$. By the above, every out-degree is one in this graph, thus it is the union of directed cycles. Let $y_1,\dots, y_k$ be the vertices of such a cycle $C$ in the cyclic order. If a query contains $d$ and $y_1$, it also contains $y_2$ by the definition of the edges. But then it also contains $y_3$, and so on. It means that the same queries from $\cF_d$ contain the vertices of $C$. Then a vertex in $C$ can distinguish $d$ from other vertices of $C$ only using queries that do not contain $d$. Let us assume $k\ge 3$. Then there is no query containing $y_1$ and $y_2$ and not containing $d$, as $y_1$ cannot distinguish $y_2$ and $d$. However, there must be such a query as $y_2$ can distinguish $d$ and $y_1$ (as $y_1\neq y_3$).

We claim that there is no cycle of length $1$, showing that every cycle is of length $2$, thus $n-1$ is even, finishing the proof. Indeed, a cycle of length $1$ would mean that $y_1$ only received YES answers, thus it only appeared in queries containing $d$. There must be a query that separates $d$ and $y_1$. Consider the first such query. By the above, it cannot contain $y_1$ and avoid $d$, hence it contains $d$ and avoids $y_1$. Thus the answer to it was YES. However, it should have been NO (according to our assumption on the answers), as before that query it was a possibility that $y_1$ is the defective element, thus it would have lead to no contradiction.

\end{proof}

\begin{thm} If Model 4 can be solved adaptively, then $(n-1)\binom{j-1}{i} \ge \binom{n-1}{i}$.

\end{thm}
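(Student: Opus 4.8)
The plan is to prove this purely as an information-theoretic double-counting bound. I would fix one defective, run the (deterministic) adaptive algorithm against it, and then count pairs consisting of a small group $X$ that cannot identify the defective together with an alternative candidate $w$ that $X$ confuses with it.

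First I would fix an arbitrary $d\in[n]$ to be the defective, run the algorithm, and let $\cF$ denote the family of queries actually asked in this run. Since the answer to a query $F$ is YES precisely when $d\in F$, an element $w$ remains a possible defective from the point of view of a group $X\subseteq[n]$ exactly when $w$ lies in the same members of $\bigcup_{x\in X}\cF_x$ as $d$ does; equivalently, when no query meeting $X$ separates $w$ from $d$. For each $w\neq d$ I would introduce the separating set $D_w:=\{x\in[n]:\text{some }F\in\cF\text{ with }x\in F\text{ separates }w\text{ and }d\}$, so that the group $X$ fails to exclude $w$ as a candidate if and only if $X\cap D_w=\emptyset$.

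The next step translates the two defining properties of Model 4 into statements about the sets $D_w$. The property that any $i$ elements none of which is the defective fail to identify it says: every $i$-subset $X\subseteq[n]\setminus\{d\}$ admits some $w\neq d$ with $X\cap D_w=\emptyset$. The property that any $j$ elements do identify the defective, applied to $j$-subsets of $[n]\setminus\{d\}$, says that no $w\neq d$ can be confused with $d$ by a $j$-set; this forces $|([n]\setminus\{d\})\setminus D_w|\le j-1$ for every $w\neq d$, since otherwise a $j$-subset of $([n]\setminus\{d\})\setminus D_w$ would be disjoint from $D_w$ and hence unable to rule out $w$.

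Finally I would count the pairs $(X,w)$ with $X$ an $i$-subset of $[n]\setminus\{d\}$, $w\neq d$, and $X\cap D_w=\emptyset$. Counting by $X$ and using the $i$-set property, there are at least $\binom{n-1}{i}$ such pairs; counting instead by $w$, there are exactly $\sum_{w\neq d}\binom{|([n]\setminus\{d\})\setminus D_w|}{i}$ of them, which by the bound $|([n]\setminus\{d\})\setminus D_w|\le j-1$ is at most $(n-1)\binom{j-1}{i}$. Comparing the two counts yields the claim. I do not expect any hard estimate here; the main thing to get right is the bookkeeping, namely verifying carefully that ``the group $X$ does not know the defective'' is genuinely equivalent to $X\cap D_w=\emptyset$ for some alternative $w\neq d$, and that the $j$-set property gives the sharp bound $|([n]\setminus\{d\})\setminus D_w|\le j-1$ rather than a weaker off-by-one inequality.
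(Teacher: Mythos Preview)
Your proof is correct and follows the same double-counting skeleton as the paper, but you organise it in the dual way. The paper works with the sets $S'_a=\{w\neq d: a\text{ cannot distinguish }w\text{ from }d\}$ indexed by elements, introduces the family $\cG'$ of minimal non-empty intersections of the $S'_a$, observes these are pairwise disjoint (hence at most $n-1$ of them), and shows each minimal intersection $G$ is contained in at most $j-1$ of the $S'_a$. You instead index by candidates $w$ and work with $C_w:=([n]\setminus\{d\})\setminus D_w$, which is exactly $\{a\neq d: w\in S'_a\}$; your two properties are precisely $\emptyset\notin\cG_i$ and $\cG_j=\{\emptyset\}$ read off columnwise. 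Your route is more elementary: the machinery of minimal intersections is not needed, and your count of $i$-subsets of elements (rather than $i$-subfamilies of $\cG$) sidesteps the question of whether the $S'_a$ are pairwise distinct.

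One point you correctly flag as needing care does deserve an explicit sentence in the write-up. In the adaptive setting the equivalence ``$X$ fails to rule out $w$ $\Leftrightarrow$ no query meeting $X$ separates $w$ from $d$'' relies on the paper's \emph{second version} of the adaptive model, in which each element only receives the family of queries containing it together with the answers, and cannot exploit the decision tree itself. If the elements knew the algorithm, the run with defective $w$ could produce a different family $\cF(w)$, and $X$ might rule out $w$ merely from which queries it did or did not see, without any separating query; then your bound $|C_w|\le j-1$ would not follow. The paper's proof carries the same implicit assumption via its definition of $S_a$, so you are consistent with it, but since you single this step out as ``the main thing to get right'' it is worth stating the model assumption explicitly.
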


\begin{proof}
Let us consider again the sets $S_a'$ (defined in the proof of Theorem \ref{steine}) after the end of the algorithm. Let $\cG$ be their family. Let $\cG_k$ be the family of sets that can be written as the intersection of $k$ sets in $\cG$. Then we know that $\emptyset \not\in \cG_i$, but $\cG_j=\{\emptyset\}$. Let us consider the family $\cG'$ of the inclusion-wise minimal non-empty sets, that can be written as the intersection of sets in $\cG$. The members of $\cG'$ are pairwise disjoint, thus there are at most $n-1$ of them. On the other hand each of them can be written as the intersection of at most $j-1$ sets in $\cG$. For every set $G\in \cG'$ let $\cG'_G$ be an inclusion-wise maximal subfamily of $\cG$ such that every member of $\cG'_G$ contains $G$. Then $|\cG'_G|\le j-1$.

Let us take $i$ sets from $\cG$. Their intersection is in $\cG_i$, thus by definition it is a superset of a set $G\in\cG'$. But this can only happen if those $i$ sets are in $\cG'_G$ (otherwise we could add one of those sets to $\cG'_G$, contradicting its maximality). For any $G\in \cG'$ there are at most $\binom{j-1}{i}$ $i$-element subfamilies of $\cG'_G$, and there are at most $n-1$ sets $G\in \cG'$. On the other hand there are $\binom{n-1}{i}$ ways to take $i$ sets from $\cG$.

\end{proof}

This theorem shows that if $i>1$, then $j$ should be large. On the other hand, unlike in the non-adaptive case, $j$ can be smaller than $n-1$. Let us consider the following simple algorithm. Let Questioner ask the singletons first. He finds the defective and then partitions the other $n-1$ elements to $i+1$ sets in a balanced way, and asks all those sets. Any $i$ elements not containing the defective get only NO answers, but there are at least $1+\lfloor (n-1)/(i+1)\rfloor$ elements they do not know anything about. On the other hand if $j>n-1-\lceil (n-1)/(i+1)\rceil$, then $j$ elements without the defective know all the answers to the non-singleton queries, thus they know the defective is the one not appearing in those queries.

\section{Concluding remarks}

We finish this article with some possible directions that can be investigated:

\vspace{2mm}

$\bullet$ In some of the above models we proved that there is a family that solves the model, but did not say anything about its possible size.

$\bullet$ In Model 4 the only remaining case is $i=1$, $j=3$. In this case we only know that a solution exists if $n=6k,6k+1,6k+3,6k+4$. We do not know if it exists for the other values (it does not exist for some small values). 

A simple way to construct a PBD-($\{3,4\}$) is the following. We take a Steiner triple system on a set $X$ of $6k+3$ elements and its partition into $3k+1$ matchings. We take a set $Y$ of $n-6k-3\le 3k+1$ additional elements and a PBD-($\{3,4\}$) on them. Finally, for every element $y\in Y$ we pick one of the matchings, and replace every set $A$ in the matching by $A\cup \{y\}$.

Let us take a family $\cF$ that is a solution for Model 4 with $i=1$, $j=3$ (instead of a PBD-($\{3,4\}$)) on $Y$. Then the resulting family is also a solution. Indeed, an element of $X$ and an element of $Y$ can be distinguished by any element, two elements of $X$ can be distinguished by any element except those two that are in a query with them, and two elements of $Y$ can be distinguished by any elements of $X$ (and all but two elements of $Y$ by our assumption on $\cF$). This argument would give a proof for Theorem \ref{model4j3} without using any characterization of PBDs. 

Additionally, let us assume there is a solution $\cF$ for Model 4 with $i=1$, $j=3$ on $6k_0+2$ elements. Let $k_1\ge 2k_0+1$ and $n=6k_1+3+6k_0+2\ge 18k_0+5$ and take the above construction. Thus we get a solution for any large enough $n=6k+5$. Similarly if we start with a solution on $n=6k_0+5$ elements (or continue with the solution found on $18k_0+5$ elements), we get a solution for large enough $n=6k+2$. Thus a solution for any of the remaining values of $n$ would imply that for every $n$ large enough there is a solution.

$\bullet$ All of the above mentioned models are also interesting in case of $d$ defectives ($d \ge 2$). In a forthcoming paper (\cite{GV2017}) we started such investigations, however a lot of questions remained open.

$\bullet$ In this paper we considered the abstract version of the Model by Tapolcai et al.~\cite{TRHHS2014,TRHGHS2016}. It would be interesting to see if our other models or our methods work with their underlying graph structure.

$\bullet$ Recently there was some interest in the $r$ round (or multi-stage) versions of combinatorial group testing problems (see e.g. \cite{DMT2013,GV2016}). It would be interesting to investigate these models in this context. 
Note that the algorithm provided in Theorem \ref{adaptive} is in fact a 2-round algorithm: in the first round the singletons are asked. With those queries Questioner finds the defective, thus he knows the answer to every later queries (he uses them only to help the elements find the defective). This means whatever algorithm is used afterwards, that can be done in one round. As he gets no new information, there is no point in waiting for the answers.

\subsection*{Acknowledgement}

We would like to thank \'{E}va Hosszu \cite{H2015}, who asked us the first question of the type that was investigated in this article.
We would also like to thank all participants of the Combinatorial Search Seminar at the Alfr\'ed R\'enyi Institute of Mathematics for fruitful discussions.

We also thank the anonymous reviewers for their careful reading of our manuscript and their many insightful comments and suggestions that improved the presentation of our article.

\end{document}